\newtheorem{theorem}{Theorem}[section]
\newtheorem{remark}{Remark}[section]
\numberwithin{equation}{section}
\newif\ifcomment \commentfalse
\def\commentON{\commenttrue}
\long\outer\def\BC#1\EC{\ifcomment \sloppy \par \# \ldots\dotfill
{\em #1} \dotfill \# \par \fi } \commentON
\newcommand{\remove}[1]{}
\newcommand{\ep}{\varepsilon}
\newcommand{\R}{{\mathbb R}}
         \newcommand{\jcal}{{\mathcal J}}
\title[A rare mutation model in heterogeneous environment]{A rare mutation model in a spatial heterogeneous environment: \\  the Hawk and Dove game}
\author[A.L.~Amadori, R.~Natalini, D.~Palmigiani]{Anna Lisa Amadori$^{1,2}$, Roberto Natalini$^{2}$, Davide Palmigiani$^{2,3}$}
\address{$^1$Dipartimento di Scienze Applicate, Universit\`a di Napoli \lq\lq Parthenope''}
\address{$^2$ Istituto per le Applicazioni del Calcolo \lq\lq M. Picone'', Consiglio Nazionale delle Ricerche}
\address{$^3$ Dipartimento di Matematica, Universit\`a di Roma ``La Sapienza''}
\thanks{The first author is  member of the GNAMPA group of the Istituto Nazionale di Alta Matematica (INdAM)}
\begin{document}

\begin{abstract}
 We propose a stochastic model in evolutionary game theory where individuals (or subpopulations) can mutate changing their strategies randomly (but rarely) and explore the external environment. This environment affects the selective pressure by modifying the payoff arising from the interactions between strategies. We derive a Fokker-Plank integro-differential equation and provide Monte-Carlo simulations for the Hawks vs Doves game. In particular we show that, in some cases, taking into account the external environment favors the persistence of the low-fitness strategy.
\end{abstract}

 \keywords{Evolutionary game theory; mutations; spatial games; Monte Carlo simulation}

 \maketitle

\section{Introduction} 

Evolutionary Dynamics describes biological systems subject to Darwinian Evolution by taking into account the main mechanisms and phenomena of Evolution itself. In \cite{MSP73}, Maynard Smith and Price propose a first instance of this approach by considering a population modified according to the replicator dynamics.
A population 
is formed by $d$ \emph{types}, or \emph{behaviors},
$E_{1},\ldots,E_{d}$, with fractions corresponding to relative abundance
in the vector $x=\left(x_{1},\ldots,x_{d}\right)$, which corresponds
to a point in the simplex
\[\mathcal{S}^d= \Big\{ x=(x_1,\dots x_d)\in\R^d \, : \, x_k\ge 0 , \, \sum\limits_{k=1}^d x_k =1 \Big\}.\]
The selection and adaptation mechanism is described by means of a system of differential equations in the following form:
\begin{equation}\label{eq:replicator dynamics}
\dfrac{\dot{x}_{k}}{x_{k}}=f_{k}(x)-\bar{f}(x),
\end{equation}
as $k=1,\dots d$.
The rate of increment ${\dot{x}_{k}}/{x_{k}}$ of the type $E_{k}$
is given by its absolute fitness, denoted with $f_{k}$, balanced
with the average fitness of the population $\bar{f}$, which has the
form
\[
\bar{f}(x)=\sum_{k=1}^{d}x_{k}f_{k}(x).
\]
In evolutionary game theory the vector of absolute fitness $f=\left(f_{1},\ldots,f_{d}\right)$ is defined as a linear
function of the population $x$, i.e.
\[ f(x)=\mathcal{U}x,\] where
$\mathcal{U}$ is the matrix of payoff that rules the interplay between different strategists. 
In this regard, the fitness of the type
$E_{k}$ is defined as the result that an individual of
that type gets colliding against another individual on average,  i.e.
\[
f_{k}(x)=\left[\mathcal{U}x\right]_{k}=\sum\limits_{i=1}^{d}u_{ki}x_{i}.
\]
{However, it is clear that the basic element for the generation of evolutionary novelties are mutations.
The first attempt of giving account of mutations, dating back to the 1970s, is the so called “quasispecies equation”, where the growth rate of each species is modified by considering the dispersion due to the birth of mutated offspring. The same underlying idea has been included in the evolutionary games setting in \cite{SS92mutation} with the “replicator-mutator” equation:
\begin{equation}\label{quasispecies}
\dot{x}_k  = \sum\limits_{i=1}^{d}  f_i  \, q_{ik}\, x_i -   \bar{f} x_k .
\end{equation}
Here the coefficient $q_{i k}$ express the proportion of offspring of $k$-type from a progenitor $i$, which shows up at any  procreation. 
An important aspect of mutations stands in their randomness, which is quite underrated in \eqref{quasispecies}. Since then many more refined models have been proposed to put into the right light randomness; we refer for instance to \cite{CFM06} showing that one single stochastic microscopic process can generate  different macroscopic models of adaptive evolution. 
More recently, in \cite{ACNT15}, it has been proposed a macroscopic stochastic model where mutations occur at a different time scale than selection. This approach goes into the direction of adaptive dynamics, but differentiates from trait substitution sequence because it is not assumed that there is complete adaptation (namely invasion or extinction of the mutant trait) between subsequent mutations. This aspect is  well-suited especially within the framework of social dilemma, or more generally when the type $E_i$ are read as strategies more then species or phenotypes.}
See also the numerical paper \cite{ABN16}, focused on Prisoner's Dilemma.

In this paper we take a step further and address our attention to  the environment, seen as a place where individuals can
evolve but also as a factor that can influence the dynamics of interaction
between strategists.
The model presented in \cite{ACNT15} is then expanded to take into account how the natural environment can modify the interactions between individuals, changing selective pressures; we add a new variable $y \in \R^N$ to the variable $x$, in the simplex, so that the status of the population is described by the pair $(x,y)$. The new variable $y$ stands for the position of the population or, more widely, for an external parameter that affects the results of the interplay between strategies.
It changes according to a velocity, partly deterministic, partly stochastic, and influences the selection mechanism because the payoff matrix depends on $y$.
\\
The interpretation is twofold: from one hand, one could start with a single population with status $(x_0,y_0)$ and look at how it evolves exploring an heterogeneous environment, subject to selection and mutation. On the other hand one could also start with a sort of meta-population composed by different subpopulations  and observe how its distribution evolves. In this sense selection only acts inside subpopulations: melting of different subpopulation is not easily introduced into a frequency-based model like the present one. This aspect certainly deserves a further investigation.

In the following section \ref{ModelloSDE} we recall the stochastic model for replicator dynamics with point-type mutations introduced in \cite{ACNT15}. With the aim of performing Monte-Carlo simulations, we give an alternative (but equivalent) description of the process by using a single Poisson random measure.
Starting from this description, we generate an algorithm to simulate our process.
Next, the spatial environment is introduced as a further stochastic variable, whose dynamics is ruled by a SDE. Therefore, we end up with two coupled SDE for the  character-position variables $(x,y)$: see \eqref{Replicator mutator space x}, \eqref{Replicator mutator space}.
  
  In Section \ref{fokkerplank} we derive a Fokker-Plank integro-differential equation for \eqref{Replicator mutator space x}, \eqref{Replicator mutator space}, (see \eqref{eq:FP} later on). The classical regularity assumptions requested by the Hormander theory are not satisfied because of the presence of a non-local term, which is the deterministic counterpart of the point process modeling mutations. We therefore read it in the viscosity sense, even if the problem \eqref{eq:FP} does not fit plainly in the standard framework of viscosity solutions for integro-differential equations: the main difficulty comes from the domain where it is set, which is closed. Actually, the model does not justify any attempt to impose a boundary condition. Moreover the nonlocal term does not depend continuously on $x$.
These difficulties are overcome by extending in a suitable way the problem to the whole space \eqref{FP2} and noticing that the produced solution can actually be interpreted as a probability density for the couple character-position $(x,y)$.

Finally, Section \ref{Simulazioni} provides numerical simulations for the probability density obtained by a Monte-Carlo method starting from the stochastic system \eqref{Replicator mutator space x}, \eqref{Replicator mutator space}, using Matlab.
We consider the two strategist game Hawks vs Doves, used by Maynard Smith to explain the high frequency of conventional displays, rather than  all-out fight, among animals (especially within heavily armed species) \cite{SHbook}.
  We modify the standard model by assuming that the cost for fighting changes according to the location, and show how an equilibrium for the replicator-mutator equation can be disrupted by effect of random motion or mutations.
  We also notice that in some particular cases, space itself allows for the survival of the low fitness species.

\section{A stochastic model for mutations in heterogeneous environment}\label{ModelloSDE}

We propose to describe the frequencies of different phenotypes in the  population according to a stochastic differential equations (SDE) in the general framework
\begin{equation}\label{eq.stocastica generale}
X_{t}=X_{0}+\int_{0}^{t}a\left(X_{s}\right)ds+\int_{0}^{t}b(X_{s})dW(s)+\int_{0}^{t}\int_{E}K(X_{s^{-}},\xi)\mathcal{N}(ds\times d\xi).
\end{equation}
Here $X_t$ is a process on a probability space $\left(\Omega,\mathcal{F},\mathbb{P}\right)$,
where $a, b, K$ are Borel measurable functions of appropriate dimensions.
$W(s)$  is a standard Brownian motion and  $P(s)$ is a Poisson point process with random measure $\mathcal{N}(ds\times d\xi)$ on $\mathbb{R}^{+}\times E$, with mean measure $l\times\nu$, $l$ Lebesgue measure on $\mathbb{R}^{+}$, $\nu$ a $\sigma$-finite measure on a measurable space $\left(E,\mathcal{E}\right)$.

The process of classic replicator dynamics (\ref{eq:replicator dynamics})
is obtained when $X=(x_1,\dots x_d)$ is the vector of relative frequencies of $d$ various phenotypes, $a$ is the vector of relative fitness, i.e. $a(X)=\left(\ldots,a_{k}(X),\ldots\right)$,
with 
\[
a_{k}(X)=x_{k}\left(f_{k}(X)-\bar{f}(X)\right),
\]
and $b$ and $K$ are null, so that 
(\ref{eq.stocastica generale}) is totally deterministic.

In  \cite{ACNT15},  mutations are described by means of a pure point process that alters replicator dynamics and the Brownian motion term is zero ($b=0$).
Any mutation has a fixed progenitor (type $i$) and a unique descendant (type $j$): this gives $2\binom{d}{2}=d(d-1)$ different mutations, precisely all those that transform a type $i$ in a type $j$ as
\[
(i,j)\in I=\left\{ \left(i,j\right)\in\left\{ 1,\dots,d\right\} ^{2};i\neq j\right\} .
\]
The mutation from type $i$ to type $j$ is driven by  a non-homogeneous point process $N_{t}^{ij}$ with stochastic
intensity $\lambda_{ij}f_{i}(X_{t^-})$. The process $N_{t}^{ij}$ makes
unit jumps with a frequency depending on the process itself, according
to the ``genetic distance'' between the types $i$ and $j$ ($\lambda_{ij}$) and
the fitness of $i$ ($f_i$): the higher the fitness, the higher the rate
of reproduction of individuals of that kind, the more they will suffer
mutations.
A further coefficient $\gamma_{ij}\in(0,1)$ measures the proportion of individuals involved in mutations: the population of type $i$ decreases
by a fraction $\gamma_{ij}x_{i}$ , while the population of type $j$
increases by the same amount. This yields a jump of the population frequency vector of size $ \gamma_{ij}x_{i} ({e_j}-{e_i})$, ${e_i}$ standing for the unit vector pointing in the direction $i$. 
The resulting SDE is 
\begin{equation}\label{processo Rare Mut}
x_{k,t}=x_{k}(0)+\int_{0}^{t}a(X_s)ds+\sum\limits_{i\neq k}\int_{0}^{t}\gamma_{ik}x_{i,t}dN_{t}^{ik}-\sum\limits_{i\neq k}\int_{0}^{t}\gamma_{ki}x_{k,t}dN_{t}^{ki} .
\end{equation}
Let us notice by now that the number of variables depicting the character can be reduced by observing that $x_d=1-\sum\limits_{i=1}^{d-1}x_i$ and setting the problem in the closed set 
\[ \Sigma^d = \{ (x_1,\dots x_{d-1}) \, : \, x_i \ge 0 , \, \sum\limits_{i=1}^{d-1}x_i \le 1\} .\]
With a little abuse of notations we shall continue to write $x\in\Sigma^d$ and 
\begin{align*}
  f_k(x)= & f_k (x_1,\dots x_{d-1},1-\sum\limits_{i=1}^{d-1}x_i) , \\
  a_k(x)= &a_{k}(x_1,\dots x_{d-1},1-\sum\limits_{i=1}^{d-1}x_i) =  (f_k-f_d)  (1-x_k) x_k - \sum\limits_{\substack{i=1\\ i\neq k}}^{d-1}  (f_i-f_d) x_ix_k. 
\end{align*}

In the same paper \cite{ACNT15}, a Kolmogorov integro-differential equation describing the expected frequencies is derived and investigated analytically, with particular attention to the long term equilibrium.
Analytical investigation  is satisfactory in the case of constant fitness (quasispecies equation), but there are some gaps concerning variable fitness, that has been tackled by a numeric approach in the subsequent paper \cite{ABN16}. 
In the present work we are  mainly concerned with Monte-Carlo simulations.
That is why, before enriching the model by including the effect of heterogeneous environment, it is worth giving an alternative description and present an algorithmic approach. \\
The SDE \eqref{processo Rare Mut} can be written  in standard form \eqref{eq.stocastica generale} 
by taking
$d(d-1)$ independent Poisson random measures $\mathcal{N}_{ij}(ds\times d\xi)$
on $\mathbb{R}^{+}\times \R^+$,  defining the amplitudes of jumps as
\[
K_{ij}(X,\xi)=\gamma_{ij}x_{i}(e_{j}-e_{i})\mathbb{I}_{[0,\lambda_{ij}f_{i}(X))}(\xi),
\]
and then invoking the Poisson embedding \cite{BIB}.

 It is possible to set up an equivalent model with only one random measure $\mathcal{N}(ds\times d\xi)$ on $\mathbb{R}^{+}\times E$ with $E=\R^+\times[0,1]$.
 To this aim we look at the sum of the stochastic intensity of each individual process
\[
\Lambda(X)=\sum_{i\neq j}\lambda_{ij}f_{i}(X),
\]
split the unit interval into  $d(d-1)$ disjoint intervals $\mathcal{I}_{ij}$ of length $\lambda_{ij}f_{i}(X)/\Lambda(X)$,
and take the amplitude of jumps as 
\begin{equation}
K\left(X,\xi\right)=K\left(X,u,\theta\right)=\mathbb{I}_{\left[0,\Lambda(X)\right)}(\theta)\sum\limits_{i\neq j}\left[\gamma_{ij}x_{i}\left(e_{j}-e_{i}\right)\mathbb{I}_{\mathcal{I}_{ij}}(u)\right].\label{K}
\end{equation}
The two processes just described 
coincide indeed.
Actually the infinitesimal generator of the last one is 
\begin{align*}
\int_{(0,1)}\int_{\R}\left[\phi\left(X+\mathbb{I}_{(0,\Lambda(X)]}(\theta)\sum_{i\neq j}\gamma_{ij}x_{i}(e_{j}-e_{i})\mathbb{I}_{\mathcal{I}_{ij}}(u)\right)-\phi(X)\right]du\,d\theta
\\
=\Lambda(X)\int_{(0,1)}\left[\phi\left(X+\sum_{i\neq j}\gamma_{ij}x_{i}(e_{j}-e_{i})\mathbb{I}_{\mathcal{I}_{ij}}(u)\right)-\phi(X)\right]du
\\
=\Lambda(X)\sum_{i\neq j}\int_{\mathcal{I}_{ij}}\left[\phi\left(X+\gamma_{ij}x_{i}(e_{j}-e_{i})\right)-\phi(X)\right]du
\\
=\Lambda(X)\sum_{i\neq j}\left|\mathcal{I}_{ij}\right|\left[\phi\left(X+\gamma_{ij}x_{i}(e_{j}-e_{i})\right)-\phi(X)\right]
\\
=\sum_{i\neq j}\lambda_{ij}f_{i}(X)\left[\phi\left(X+\gamma_{ij}x_{i}(e_{j}-e_{i})\right)-\phi(X)\right] ,
\end{align*}
i.e.~the same infinitesimal generator of the first one. 

In view of Monte Carlo approximations let us give an intuitive interpretation, albeit rigorous, of the two settings, based on the existence theorem for Poisson random measures in \cite{KIS}. 
Let $T>0$ a fixed time horizon and 
\[\lambda_{ij}^{\mathrm{max}}= \max\limits_{X} \lambda_{ij} f_i(X).\]
The evolution process can be simulated by the following steps
\begin{enumerate}[i)]
\item Compute a priori the $d(d-1)$ 
independent homogeneous Poisson processes
with intensities $\lambda_{ij}^{\mathrm{max}}$ with jump times
$T_{n}^{ij},$ lower than $T$: these processes jump more
often than the target ones;
\item Simulate  the ODE related to replicator dynamics till the
minimum between the times $T_{n}^{ij}$, say it $t = T_{1}^{\hat i \hat j}$;
\item Extract a uniform random number $\xi\in [0,1]$
  and use an acceptance-rejection method:
  \begin{enumerate}[{iii.}a)]
  \item if $\lambda_{\hat i\hat j}^{\mathrm{max}} \xi>\lambda_{\hat i\hat j} f_{\hat i}(X_{t^-})$ no jumps occur;
  \item otherwise shift a quantity $\gamma_{\hat i \hat j}x_{\hat i,t^-}$ from $\hat i$ to $\hat j$.
    \end{enumerate}
  \item Restart from step {ii)}.
\end{enumerate}

Concerning the second setting, let
\[ \Lambda^{\mathrm{max}} = \max\limits_{X} \Lambda (X),\]
and execute the following steps
\begin{enumerate}[i)]
\item Build a priori an homogeneous Poisson process with intensity $\Lambda^{\mathrm{max}}$, whose  jump times  will be denoted by $T_{n}$ lower than $T$;
\item Simulate the replicator dynamics till $T_1$;
\item Extract uniformly a random number $\xi\in [0,1]$;
  \begin{enumerate}[{iii.}a)]
  \item if $\Lambda^{\mathrm{max}}\xi> \Lambda(X_{T_1^-}) $ no jump occur,
  \item if $\Lambda^{\mathrm{max}}\xi\le \Lambda(X_{T_1^-}) $ a jump occurs indeed. \\ To decide which kind of mutation occurs,  extract another random number $u\sim{\mathrm{Unif}}(0,1)$ and look at which interval $\mathcal{I}_{\hat i\hat j}$ it belongs (it is possible because the sets $\mathcal{I}_{ij}$ form a partition of $[0,1]$).
    
Then shift a quantity $\gamma_{\hat i\hat j}x_{\hat i,T_1^-}$ from $\hat i$ to $\hat j$.
  \end{enumerate}
\item Restart  from step {ii)}.
\end{enumerate}

\subsection{Heterogeneous environment}

{In the present model the only observed variables are the frequencies of the various phenotypes, as well as in the classical replicator equation.
The rules of the play are fixed once and for all by means of the payoff matrix ${\mathcal U}$, and nothing depends by the physical position of the population, as if the individuals were not able to move, or if the environment were completely homogeneous. A more realistic picture has to take into account that environmental changes affect the results of interaction between different behaviors.
\\
To introduce heterogeneous environment we increase the observed variables so that the status of the population (or of a sub-population) is described by a pair $X=\left(x,y\right)$: as before $x=(x_{1},\ldots,x_{d-1})\in\Sigma^{d}$ 
stands for the \emph{character} of the population, each $x_i$ being the fraction of individuals of type $E_{i}$ (and $x_d=1-\sum\limits_{i=1}^{d-1}x_i$ the fraction of type $E_d$), while the new variable  $y\in\mathbb{R}^{N}$ stands for the \emph{position} of the population.
More widely this new variable can be seen as an external parameter that affects the results of the interplay between strategies.
  The payoff matrix depends by $y$, i.e. $\mathcal{U}=\mathcal{U}(y)$, consequently also the respective fitness
  \begin{align*}
    f_k(x,y)= & \sum\limits_{i=1}^{d-1} u_{ki}(y)x_i + u_{id}(y)(1-\sum\limits_{i=1}^{d-1} x_i) 
    \end{align*}
    varies with $y$.
\\
The  character $x$ evolves according to a suitable version of equation (\ref{processo Rare Mut}):
\begin{equation} \label{Replicator mutator space x}
 \displaystyle  x_t=x_0+\int_{0}^{t}a\left(x_s,y_s\right)ds+\int_{0}^{t}\int_{E}K(x_{s^-},y_s,\xi)\mathcal{N}(ds\times d\xi) .
\end{equation}
Here
\begin{itemize}
\item
  $a\in R^{d-1}$ stands for the \emph{vector field of the replicator dynamics}. It has the same structure as in the former case, but with
    an important difference:  the fitness are allowed to depend from $y$, so that 
\[ a_k(x,y)=x_k (f(x,y)-\bar f(x,y)) \quad \text{ as } k=1,\dots d.\]
\item The jump amplitude $K$ and the Random measure $\mathcal N$ describe the \emph{mutation process} as before. The location $y$ affects the mutation process through the fitness, as
 \[ \Lambda(x,y)=\sum_{i\neq j}\lambda_{ij}f_{i}(x,y), \]
\[
K\left(x,y,u,\theta\right)=\mathbb{I}_{\left[0,\Lambda(x,y)\right)}(\theta)\sum\limits_{i\neq j}\gamma_{ij}x_{i}\left(e_{j}-e_{i}\right)\mathbb{I}_{\mathcal{I}_{ij}(x,y)}(u),
  \]
  where the intervals $\mathcal{I}_{ij}(x,y)$ have length equal to $\lambda_{ij}f_{i}(x,y)/\Lambda(x,y)$ and form a partition of the unit interval, as $i\neq j \in \{1,\dots, d\}$.
\end{itemize}
The position $y$ changes according to a diffusion with drift:
\begin{equation}\label{Replicator mutator space}
 \displaystyle y_t=y_0+\int_{0}^{t}v\left(x_s,y_s\right)ds+\int_{0}^{t}\sigma\left(x_s,y_s\right) dW_s ,
\end{equation}
where
\begin{itemize}
  \item
    $v\in R^N$ stands for the \emph{velocity field of the population}.
    For any given $y$, $v(e_i,y)$ is the drift of the type $E_i$, while a composite population described by the character $x$ is inclined to move according to $v(x,y)$.
\item $\sigma$ is an $N\times N$  matrix and $W_s$ is an $N$-dimensional Brownian motion, describing the random component of the displacement.
\end{itemize}
The model \eqref{Replicator mutator space x}, \eqref{Replicator mutator space}  can be read in two ways. From one hand, one could start with a single population with initial status $(x_0,y_0)$ and look at how it evolves exploring an heterogeneous environment, subject to selection and mutation. But one could also start with a sort of metapopulation composed by different subpopulations distributed like $x_0(y)$ and observe how its distribution evolves. 
In this sense selection only acts inside subpopulations: melting of different subpopulations is not easily introduced into a frequency-based model like the present one. This aspect certainly deserves a further investigation.}

The well posedness of the  process \eqref{Replicator mutator space x}, \eqref{Replicator mutator space} is assured by classical arguments (see \cite{AKK},  \cite{Ikeda}). 
Monte-Carlo simulations  do not require
substantial changes compared to the non-spatial case: the additional Brownian motion can be effectively simulated in a standard way.


\section{A Fokker-Plank equation for the probability density}\label{fokkerplank}
The stochastic process \eqref{Replicator mutator space x}, \eqref{Replicator mutator space} can be described in a deterministic way by means of two Kolmogorov integro-partial differential equations: the backward one, also known as Feynman-Kac equation, (related to expected value) and the forward one, also known as Fokker-Plank equation (related to the density).

With minor changes from \cite[Proposition 3.1]{ACNT15}, one easily sees that the infinitesimal generator of the process \eqref{Replicator mutator space x} (settled in $\Sigma^d$), \eqref{Replicator mutator space} is
\begin{equation}\label{backward}
{\mathcal L} \phi=  a \cdot D_x \phi + v\cdot D_y \phi + \frac{1}{2}\mathrm{Tr} \left(\sigma\sigma^t D^2_{yy} \phi \right) + {\mathcal J} \phi .\end{equation}
Here $D_x$ and $D_y$ stand for the vectors of first derivatives w.r.t.~$x\in\R^{d-1}$ and $y\in\R^N$, respectively, $D^2_{yy}$ stands for the $N\times N$ matrix of the second order derivatives w.r.t.~$y$, $a$, $v$, $\sigma$ are the same functions appearing in \eqref{Replicator mutator space x}, \eqref{Replicator mutator space}, and $\jcal$ is a non-local functional related to a discrete measure:
\begin{align*}
    {\mathcal J} (x,y,\phi) = & \int_{\R^{d-1}} \left(\phi(x+z,y,t)- \phi(x,y,t) \right) d\mu_{x,y}(z) , \\
    \mu_{x,y} (z)= & \sum\limits_{\substack{ i, j = 1 \\ i\neq j}}^{d-1}\lambda_{ij} f_i(x,y) \delta_{\{\gamma_{ij}x_i ({e_j}-{e_i})\}}(z)  +  \sum\limits_{i=1}^{d-1}\lambda_{id} f_i(x,y) \delta_{\{-\gamma_{id}x_i{e_i}\}}(z) \\
    & + \sum\limits_{i=1}^{d-1}\lambda_{di} f_d(x,y) \delta_{\{\gamma_{di}(1-\sum\limits_{k=1}^{ d-1}x_k)\}}(z).
\end{align*}
The expected value at time $t$ of a population which is at state $(x,y)$ at time $t=0$ is described by  $u(x,y,t)$, the solution to the Feynman-Kac system
\begin{equation}\label{FeynmanKac}
\begin{cases}
\partial_t u_k - a\cdot D_x u_k -v\cdot D_y u_k - \frac{1}{2}\mathrm{Tr} \left(\sigma\sigma^t D^2_{yy} u_k \right)  ={\mathcal J} u_k , 
\\
u_k(x,y,0)=\begin{cases} x_k & \mbox{ as } k=1,\dots d-1 , \\ y_{k-d} & \mbox{ as } k=d,\dots d+N-1 . \end{cases}
\end{cases}
\end{equation}
Otherwise, one can be interested into the macroscopic function $\varrho(x,y,t)\in [0,1]$, measuring the probability of finding a population distribution $(x_1,\dots x_{d-1}, 1-\sum\limits_{i=1}^d x_i)\in S^d$ in the position $y\in\R^N$ at time $t$.
For instance at time $t>0$ the quantity
\[ P_i(t)=\iint\limits_{(B_{\varepsilon}(e_i)\cap \Sigma^d)\times\R^N} \varrho(x,y,t) dx dy \]
depicts the probability of having an high proportion of individuals of type $i$, while 
\[ P_i(t, \delta)=\iint\limits_{(B_{\varepsilon}(e_i)\cap \Sigma^d)\times B_{\delta}(0)} \varrho(x,y,t) dx dy \]
depicts the probability of finding an high proportion of individuals of type $i$ near at the origin.
\\
This can be done if the starting point is one population with character $x$ in the position $y$ (that is the initial datum is a Dirac function centered at $(x,y)$), but it becomes even more interesting if the initial status consists in a high numbers of subpopulations distributed according to a density $\varrho_0(x,y)$.
In this last case, the probability density $\varrho(x,y,t)$ furnishes a  statistical picture of a metapopulation composed by subpopulations which evolve according to selection and point-type mutations, and move independently.

To assure that the probability density is smooth, the classical theory by Hormander requests some technical assumptions, also in the diffusive setting (i.e.~in absence of mutations). 
We refer, for instance, to the book \cite{Nualart} for more details about the Hormander theory.
The investigation of a-priory regularity of the probability density greatly complicates in the presence of mutations.
We therefore choose to write the Fokker-Plank equation formally and then to settle it in the framework of viscosity solution theory. This approach has the advantage of asking very few a-priori regularity and producing well-posed solutions even in the integro-differential setting arising from rare mutations. 

Following Pavliotis \cite{Pavliotis} we  compute  $\mathcal{L}^{*}$, the dual operator in $L^{2}(\Sigma^d\times\R^N)$ of the infinitesimal generator:
\begin{equation}\label{forward} 
  {\mathcal L}^* \phi=  \frac{1}{2} \sum\limits_{h, k =1}^{N}\partial^{2}_{y_h y_k} \left((\sigma\sigma^t)_{h k }  \phi\right) -  {\mathrm{div}}_x \left(\phi a\right) - {\mathrm{div}}_y \left(\phi v\right) 
  + \sum\limits_{i=1}^d{\jcal}_i^*(f_i\phi) ,
\end{equation}
where now
\begin{align*}
  {\jcal}_i^*(x,y,\phi)= & \int_{\R^{d-1}} \left(\phi(x+z,y,t)- \phi(x,y,t) \right) d\mu^i_{x,y}(z) , \\
  d\mu^i_{x,y}(z) = & \sum\limits_{\substack{ j = 1 \\ j\neq i}}^{d-1}\lambda_{ij}(1+\gamma^*_{ij}) \, 1_{\Sigma^d}(x+ \gamma^*_{ij}x_i ({e_i}-{e_j})) \delta_{\{\gamma^*_{ij}x_i ({e_j}-{e_i})\}}(z) \\
  &  +  \lambda_{id}(1+\gamma^*_{id}) \, 1_{\Sigma^d}(x+ \gamma^*_{id}x_i{e_i}) \delta_{\{-\gamma^*_{id}x_i{e_i}\}}(z) ,
  \intertext{ as $i=1,\dots d-1$ and} 
d\mu^d_{x,y}(z) = & \sum\limits_{ j = 1}^{d-1}   \lambda_{dj}(1+\gamma^*_{dj}) \, 1_{\Sigma^d}(x-\gamma^*_{dj}(1-\sum\limits_{k=1}^{d-1} x_k){e_j})\delta_{\{\gamma^*_{dj}(1-\sum\limits_{k=1}^{d-1}x_k){e_j}\}}(z) , 
\end{align*}
for 
$\gamma^*_{ij} =\gamma_{ij}/(1-\gamma_{ij})$.
 It turns out that, if $\varrho_0(x,y)$ is the probability density of the random variable $X_0=(x_0,y_0)$ describing the initial distribution of subpopulations, and if the solution $X_t=(x_t,y_t)$ to \eqref{Replicator mutator space x}, \eqref{Replicator mutator space} has a sufficiently smooth probability density $\varrho(x,y,t)$ for $t>0$, then it solves the initial value problem
  \begin{equation}\label{eq:FP}
    \begin{cases}
      \partial_t\varrho  -\frac{1}{2}\sum\limits_{h , k  =1}^{N}\partial^{2}_{y_h y_k} \left((\sigma\sigma^t)_{ij}  \varrho\right)  +{\mathrm{div}}_x\left( \varrho a \right) +{\mathrm{div}}_y \left( \varrho v\right) 
      =\sum\limits_{i=1}^{d} \jcal_i^*(f_i\varrho) \\
      \varrho(x,y,0)=\varrho_0(x,y) ,\end{cases}
    \end{equation}
in the closed set $(x,y)\in \Sigma^d\times\R^N$ and $t>0$.

Let us explicitly remark that  nonlocal operators $\jcal^*_i$ are not continuous w.r.t.~$x$: this fact may have a huge instability effect.
We therefore switch to another problem   which is settled into all $\R^{d-1}\times\R^N$ and is continuous. To this end we extend the fitness functions $f_i$, the drift $v$ and the diffusion $\sigma$ in a bounded smooth way to all  $\R^d\times\R^N$  so that $f_i\ge 0$ have support  contained in a cylinder, say $B_R(0)\times\R^N$. Concerning the initial datum $\varrho_0$, it can be extended as $\varrho_0\equiv 0$ outside $\Sigma^d\times\R^N$.
  We thus look into the problem
  \begin{equation}\label{FP2}
    \begin{cases}
      \partial_t\varrho  -\frac{1}{2}\sum\limits_{h , k  =1}^{N}\partial^{2}_{y_h y_k} \left((\sigma\sigma^t)_{ij}  \varrho\right)  +{\mathrm{div}}_x\left( \varrho a \right) +{\mathrm{div}}_y \left( \varrho v\right) + c \varrho = \tilde\jcal\varrho  \\
    \varrho(x,y,0)=\varrho_0(x,y) ,\end{cases}
    \end{equation}
  for $(x,y)\in \R^d\times\R^N$ and $t>0$, where now
\begin{align*}
  \tilde{\jcal}(x,y,\phi)= & \int_{\R^{d-1}} \left(\phi(x+z,y,t)- \phi(x,y,t) \right) d\mu_{x,y}(z) , \\
  d\mu_{x,y}(z) = & \sum\limits_{\substack{i, j = 1 \\ j\neq i}}^{d-1} m_{ij}(x,y) \delta_{\{\gamma^*_{ij}x_i ({e_j}-{e_i})\}}(z) \\
  &  + \sum\limits_{i= 1}^{d-1} m_{id}(x,y)\delta_{\{-\gamma^*_{id}x_i{e_i}\}}(z) +  \sum\limits_{ j = 1}^{d-1}   m_{dj}(x,y)\delta_{\{\gamma^*_{dj}(1-\sum\limits_{k=1}^{d-1}x_k){e_j}\}}(z) , \\
 m_{ij}(x,y)= & (1+\gamma^*_{ij})\lambda_{ij} f_i(x+ \gamma^*_{ij}x_i ({e_i}-{e_j}),y),
\intertext{ as $i,j=1,\dots d-1$, with $i\neq j$, and}
m_{id}(x,y)= & \lambda_{id}(1+\gamma^*_{id})  f_i(x+ \gamma^*_{id}x_i,y), \\
 m_{di}(x,y)= &  \lambda_{di}(1+\gamma^*_{di})f_d(x-\gamma^*_{di}(1-\sum\limits_{k=1}^{d-1} x_k){e_i},y),
 \intertext{ as $i=1,\dots d-1$,} 
 c(x,y)=  & \sum\limits_{\substack{ i, j = 1 \\ i\neq j}}^{d} \left(\lambda_{ij} f_i(x,y) - m_{ij}(x,y)\right) .
\end{align*}
  
It is worth clarify that the equation in \eqref{FP2} does not coincide with the one in \eqref{eq:FP} even if $x\in \Sigma^d$. Although they do coincide for that functions  $\varrho$ which are zero for $x$ outside $\Sigma^d$.
On the other hand if the support of $\varrho_0$ is contained in $\Sigma^d\times\R^N$ and $\varrho(t)\in L^1(\R^{d-1}\times\R^N)$ is nonnegative, then also  the support of $\varrho(t)$ is contained in $\Sigma^d\times\R^N$.

To see this fact, let
\begin{align*}
  A_k=& \{x\in\R^{d-1} \, : x_k<0\}\quad & \text{ as } k=1,\dots d-1, \\
  A_d=& \{x\in\R^{d-1} \, : \sum\limits_{k=1}^{d-1}x_k >1\},  &  \\
  I_k(t)=&\iint\limits_{A_k\times\R^N}\varrho(t) dx dy \qquad & \text{ as } k=1,\dots d.
\end{align*}
It suffices to check  that  $\dfrac{d}{dt}I_k(t)\le 0$.
For simplicity we perform computations only in the case $d=2$.
 Integrating the equation in  \eqref{FP2} on $A_1\times\R^N$  gives
\begin{align*}
  \dfrac{d}{dt} I_1(t) =& -\int\limits_{\R^N} (a_1\varrho)(0,y)  dy
 +\lambda_{12}\iint\limits_{A_1\times\R^N}\left((1+\gamma^*_{12}) (f_1\varrho)((1+\gamma^*_{12})x, y,t) - (f_1\varrho)(x,y,t)\right) dx dy \\
 &  + \lambda_{21}\iint\limits_{A_1\times\R^N}\left((1+\gamma^*_{21}) (f_2\varrho)(x-\gamma^*_{21}(1-x),y,t)- (f_2\varrho)(x,y,t)\right) dx dy \\
 \intertext{remembering that $a_1(0,y)\equiv 0$ and performing the obvious transformations in the second and third integrals yields}
 = & -\lambda_{21}\int_{\R^N}dy\int_{-\gamma_{21}^*}^{0} dx (f_2\varrho)(x,y,t)   \le  0
\end{align*}
because $f_2\varrho\ge 0$.
Similarly, since $a_1(1,y)\equiv 0$ one gets
\begin{align*}
  \dfrac{d}{dt} I_2(t) =&  -\lambda_{12}\int_{\R}dy\int_1^{1+\gamma_{12}^*} dx (f_1\varrho)(x,y,t) \le   0 .
\end{align*}

It has also to be stressed that, in order to read the solution $\varrho(t)$ as a probability density, its total mass has to be $1$, that is 
\begin{align*}
M(t)=\iint\limits_{\R^{d-1}\times\R^N} \varrho(x,y,t) dx dy = 1 \quad & \mbox{ for all } t>0,
  \end{align*}
provided that $M(0)=\iint\limits_{\Sigma^d\times\R^N} \varrho_0(x,y) dx dy = 1$. 
Again, integrating the equation in  \eqref{FP2}   gives
\begin{align*}
  \dfrac{d}{dt} M(t) =&\lambda_{12}\iint\limits_{\R\times\R^N}\left((1+\gamma^*_{12}) (f_1\varrho)((1+\gamma^*_{12})x, y,t) - (f_1\varrho)(x,y,t)\right) dx dy \\
 &  + \lambda_{21}\iint\limits_{\R\times\R^N}\left((1+\gamma^*_{21}) (f_2\varrho)(x-\gamma^*_{21}(1-x),y,t)- (f_2\varrho)(x,y,t)\right) dx dy 
  = &0
\end{align*}
after a trivial change of variables.
Hence the total mass is preserved in the modified problem \eqref{FP2}.

In view of these remarks, we can read as the probability density of the process \eqref{Replicator mutator space x}, \eqref{Replicator mutator space} a solution $\varrho(t)$ to the Cauchy problem \eqref{FP2} with the properties $\varrho(t)\in L^1(\R^{d-1}\times\R^N)$ and $\varrho(t)\ge 0$ for  $t>0$.
The existence of such a solution is assured in the viscosity framework.

\begin{theorem}\label{teovisco}
  Assume that $f_i, v \in C^{1,1}(\R^{d-1}\times\R^N)$, $\sigma\in C^{2,1}(\R^{d-1}\times\R^N)$ are bounded together with their derivatives, with $f_i\ge 0$ and $\sigma\ge \ep>0$.
  Take $\varrho_0$ a Lipschitz-continuous, bounded function  whose support is compact and contained in the interior of $\Sigma^d\times\R^N$ such that $\varrho_0\ge 0$ and $\iint \varrho_0 dx dy =1$.
  Then there exists a unique viscosity solution to \eqref{FP2}.
  Moreover  $\varrho(t)\in L^1(\R^{d-1}\times\R^N)$ and $\varrho(t)\ge 0$ for all $t>0$.
  \end{theorem}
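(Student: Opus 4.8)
The plan is to recast \eqref{FP2} into a form amenable to the theory of viscosity solutions for second-order integro-differential equations, and then to run the classical comparison--Perron strategy. As written, the equation is in divergence form, which is not the natural setting for viscosity solutions; but since $\sigma\in C^{2,1}$ and $f_i,v\in C^{1,1}$ are bounded together with their derivatives, I would first expand every divergence term by the Leibniz rule, turning \eqref{FP2} into the non-divergence linear integro-parabolic equation
\[ \partial_t\varrho -\tfrac12\sum_{h,k}(\sigma\sigma^t)_{hk}\,\partial^2_{y_hy_k}\varrho + \tilde a\cdot D_x\varrho + \tilde v\cdot D_y\varrho + \tilde c\,\varrho = \tilde{\jcal}\varrho, \]
whose coefficients $\tilde a,\tilde v,\tilde c$ collect the original data and first derivatives of $\sigma,a,v$, and hence stay bounded and continuous. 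Three structural features are then decisive: the diffusion is uniformly parabolic in $y$, because $\sigma\ge\ep$ yields $\sigma\sigma^t\ge\ep^2 I$; the equation is purely first-order (transport plus jumps) in $x$, so no smoothing is available there; and the Lévy measure $\mu_{x,y}$ is a \emph{finite} sum of Dirac masses, so that $\tilde{\jcal}$ is a bounded operator on bounded continuous functions. Crucially, after the extension carried out before the statement the intensities $m_{ij}$ are continuous in $x$, curing the discontinuity of the original $\jcal_i^*$ that motivated passing to \eqref{FP2}.

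Having fixed the standard viscosity notion for finite-measure integro-PDEs (evaluating the nonlocal term directly on the test function), I would establish a comparison principle in the class of bounded solutions by the doubling-of-variables technique. The local second-order terms are handled in the usual way through the uniform parabolicity in $y$ and the theorem on sums. The delicate point, and the step I expect to be the \emph{main obstacle}, is the nonlocal contribution: at the doubled maximum of $u(x,y)-\hat u(\hat x,\hat y)-\tfrac1{2\delta}\bigl(|x-\hat x|^2+|y-\hat y|^2\bigr)$ the jump terms do not cancel, since both the jump vectors $\gamma^*_{ij}x_i(e_j-e_i)$ and the intensities $m_{ij}(x,y)$ depend on the state. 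I would estimate this difference using the Lipschitz continuity of the jump vectors, which is explicit, and of the $m_{ij}$, which follows from the $C^{1,1}$ regularity of the $f_i$, so as to bound the nonlocal term by $C\,\sup(u-\hat u)^+$ plus a vanishing error; a Gr\"onwall argument in $t$ then closes the comparison and delivers uniqueness.

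For existence I would invoke Perron's method, which here comes with explicit barriers. The constant $w^-\equiv0$ is itself a solution, since the spatial derivatives and $\tilde{\jcal}$ annihilate constants, so it is a subsolution lying below $\varrho_0\ge0$. A supersolution is $w^+=\|\varrho_0\|_\infty\,e^{Kt}$, where $K\ge\|\tilde c^-\|_\infty$ absorbs the zeroth-order term while $\tilde{\jcal}w^+=0$; one may instead take a heat-kernel-type profile decaying in $y$, available thanks to the uniform parabolicity and the boundedness of $v$. Perron's method together with the comparison principle then produces a unique continuous viscosity solution squeezed between $w^-$ and $w^+$. Nonnegativity is immediate from the same barrier: as $0$ is a solution and $0\le\varrho_0$, comparison gives $\varrho(t)\ge0$.

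It remains to justify $\varrho(t)\in L^1(\R^{d-1}\times\R^N)$ and the total mass $M(t)\equiv1$. Because the equation degenerates in $x$, classical regularity is not automatic and the formal computations of $\tfrac{d}{dt}I_k(t)\le0$ and $\tfrac{d}{dt}M(t)=0$ performed before the statement cannot be invoked verbatim. I would make them rigorous by a vanishing-viscosity regularization, adding $\eta\,\Delta_x\varrho$ to restore joint uniform parabolicity: the regularized problems have smooth solutions for which those identities are legitimate, the $x$-support stays in the compact simplex $\Sigma^d$ and the Aronson-type Gaussian bound for the $y$-diffusion gives integrability in $y$, and these estimates are uniform in $\eta$. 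Passing to the limit $\eta\to0$ by the stability of viscosity solutions (the limit coinciding with $\varrho$ by uniqueness) transfers both the $L^1$ bound and the mass identity $M(t)=M(0)=1$ to the solution of \eqref{FP2}; equivalently, one may identify $\varrho$ with the law of the process $(x_t,y_t)$, whose existence is guaranteed, to read off $\varrho\ge0$ and $\iint\varrho\,dx\,dy=1$ directly.
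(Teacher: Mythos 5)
Your opening steps coincide with the paper's: the proof there also begins by expanding the divergence terms to put \eqref{FP2} in non-variational form (this is equation \eqref{FP3}, with continuous bounded coefficients), and then settles uniqueness, existence, boundedness and Lipschitz regularity in one stroke by invoking \cite[Theorems 1.1, 1.2]{Ama07}, a result tailored to exactly this class of integro-parabolic equations with bounded, finitely supported jump measures and a degenerate direction. Your doubling-of-variables comparison argument and Perron construction are, in effect, a re-derivation of that citation, and you correctly single out the delicate point (state-dependent jump vectors and intensities, handled via their Lipschitz continuity). Up to this point the proposal is a self-contained variant of the same route and is acceptable.

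The genuine gap is in the claim $\varrho(t)\in L^1(\R^{d-1}\times\R^N)$. Your Perron supersolution $\|\varrho_0\|_\infty e^{Kt}$ is constant in $x$, and even your improved barrier decays only in $y$; since the equation has no diffusion in $x$, no Aronson-type bound is available in that variable, so the solution you construct is merely bounded and nothing forces decay, let alone integrability, in $x$. Your repair by vanishing viscosity rests on a false claim: once you add $\eta\,\Delta_x\varrho$, the $x$-support does \emph{not} stay in $\Sigma^d$ --- a nondegenerate diffusion in $x$ propagates positivity instantaneously, and the computation $\frac{d}{dt}I_k(t)\le 0$ preceding the theorem relies precisely on the absence of any second-order term in $x$; to salvage it you would need boundary-flux or gradient estimates at $\partial\Sigma^d$ uniform in $\eta$, which you do not provide. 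The fallback of ``identifying $\varrho$ with the law of the process'' is circular here: the paper's whole reason for working with viscosity solutions is that it is not known a priori that the law possesses a (sufficiently regular) density solving the equation. The missing idea is that decay in $x$ can be enforced by comparison \emph{despite} the degeneracy: the $x$-dynamics consists of transport with bounded velocity plus jumps of bounded size and bounded intensity, so explicit supersolutions decaying exponentially in $x$ (and in $y$) exist once the constant in front of $t$ in the exponent is taken large enough. This is exactly estimate \eqref{integrabilita}, which the paper extracts from the comparison principle of \cite{Ama07} and which yields $0\le\varrho(t)$ and $\varrho(t)\in L^1$ simultaneously; without a barrier of this type, your argument does not deliver the last assertion of the theorem.
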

\begin{proof}
 First of all the equation in  \eqref{FP2} has to be written in the standard form of the viscosity solution framework, which is nonvariational. This can be done if the coefficients $f_i, v, \sigma$ have the regularity requested by hypothesis.   So we write
  \begin{equation}\label{FP3}
    \partial_t\varrho + a \partial_x \varrho + b \partial_y  \varrho + c \varrho -\frac{1}{2}  \sigma^2 \partial^2_{yy} \varrho = \sum\limits_{i=1}^2\tilde\lambda_{i} {\mathcal I}_i\left( \varrho \right)  \end{equation}
  where now
 \begin{align*}
\tilde \lambda_1(x,y)= & \lambda_{12}(1+\gamma_1^*) f_1 (x+\gamma_1^*x,y) , \\
\tilde \lambda_2(x,y)= & \lambda_{21}(1+\gamma_2^*) f_2 (x-\gamma^*_2(1-x),y) ,
\intertext{and consequently}
c(x) =  &  \partial_xa +\partial_y v -\frac{1}{2}\partial^2_{yy}\sigma^2 +
\sum\limits_{i=1}^2(\lambda_{i} f_i-\tilde\lambda_{i}) 
 \end{align*}
 are continuous and  bounded.
  This problem satisfies the assumptions in \cite[Theorems 1.1, 1.2]{Ama07}, therefore it has a unique continuous viscosity solution $\varrho(x,y,t)$ which is Lipschitz-continuous w.r.t.~$x,y$ and bounded.  Moreover comparison principle holds, in particular one can find suitable parameters $c_1,c_2,c_3$ so that
   \begin{equation}\label{integrabilita}
     0 \le \varrho \le \exp(c_1t - c_2\sqrt{1+x^2} - c_3 y^2) \quad \mbox{ in } \R^2\times[0,\infty) .\end{equation}
   In particular $\varrho(t)\in L^1(\R^2)$ for all $t$.
  \end{proof}

\begin{remark}
  The assumption $\sigma\ge \ep >0$ has only been used to obtain the estimate from above in  \eqref{integrabilita} and infer the integrability of the solution and the equation into all $\R^2$. The hypothesis can be removed by asking something more to the drift $v$ in order to assure some decay w.r.t.~$y$.
\end{remark}

Besides applications request to take a probability measure as  initial datum. This would hugely increase the mathematical difficulty. 
The recent paper \cite{KMS15} presents interesting results in this direction, which are modeled on the fractional Laplacian and therefore do not include the discrete non-local operator appearing here.
We also mention \cite{PR14} for some transport problem involving measures.

\section{Hawks and Doves: a numerical study}\label{Simulazioni}

In this section we take as a case study the two strategy game Hawks vs Doves ($d=2$),
with the following payoff matrix:
\[
\mathcal{U}=\left(\begin{array}{cc}
\frac{G-C}{2} & G\\
0 & \frac{G}{2}
\end{array}\right),
\]
where the coefficients are both positive. 
The fitness functions for Hawks ($x_{1}$) and Doves ($x_{2}$),
are respectively
\[
f_{1}=(G-C)x_{1}/2+Gy,\quad f_{2}={G}x_{2}/2,
\]
then the replicator dynamics (reducing the coordinates only to $x\in\left[0,1\right]$,
fraction of hawks) is
\[
\dot{x}=x\left(1-x\right)\left(f_{1}-f_{2}\right)=
x\left(1-x\right)\left(G-Cx\right)/2.
\]
Besides the pure-strategies equilibria $x=0$ (all Doves) and $x=1$
(all Hawks), a mixed strategies equilibrium can occur, $\bar{x}={G}/{C}$, when $C>G$: in this case you have the real
game of Hawk vs Doves, with $\bar{x}$ attractive and the other two values $0$ and $1$ which become unstable equilibria.
Notice that when the cost of the fight $C$ increases,
the percentage of hawks at the equilibrium $\bar{x}$ decreases; instead,
when the cost of fighting is less or equal than the gain, $C\leq G$,
the only equilibria in $\left[0,1\right]$ are the pure-strategies
ones, with $x=1$ attractive; the population tends to become only
hawks.

We add to the two strategies game also the space component, with $y\in\mathbb{R}$ ($N=1$). In particular we assume that the cost for fighting depends by $y$ as
\begin{align*}
  C(y)=& \frac{3G}{2}\left[1+ \frac{2}{\pi}\arctan(y)\right] 
  .
\end{align*}
The function is designed so that, at $y=0$, the cost for fighting is $C={3}G/{2}>G$ and we have a coexistence equilibrium $\bar{x}={2}/{3}$.
At the left of $y=0$ the cost decreases and it is equal to the gain for $y=-{\sqrt{3}}/{3}$, so  for smaller values of $y$ the coexistence equilibrium  disappears  and  hawks increase.
At the left of $y=0$  environment is more favorable to doves, because the cost increases up to $3G$, so that the fraction of hawks at equilibrium  decreases up to $\bar{x}={1}/{3}$.

The initial state  $\left(x,y\right)=({2}/{3},0)$ is an equilibrium when no motion is allowed, or when a deterministic motion  with $v({2}/{3},0)=0$ is considered. The following simulations show that this situation can be disrupted by Brownian motion and/or mutations.

Simulations that follow represent the density $\varrho(x,y,t)$, for $\left(x,y\right)\in[0,1]\times\mathbb{R}$,
and have been obtained in Matlab using Monte-Carlo methods. Roughly speaking, the stochastic
process is simulated for a high number of times, to statistically
estimate the density.
\begin{itemize}
\item Fixed the final time, $T$, we discretize the time interval $\left[0,T\right]$
in, at least, $N=2^{8}$ sub-intervals with the same length; fixed
an accuracy $\alpha$, the number $N$ increases up to make sure that
the probability of the event ``up to one jump in each interval''
is greater than $(1-\alpha)\%$;
\item We choose the number of iterations of the method, $\mathrm{itermax}$; we fix
two values, $N_{x}$, $N_{y}$ and the interval $\left[y_{min},y_{max}\right]$
in which we want to display the density; then we create a grid on
$[0,1]\times\left[y_{min},y_{max}\right]$, dividing the first interval
in $N_{x}$ parts, the second in $N_{y}$ ($y_{min}=-5$, $y_{max}=5$,
$N_{x}=N_{y}=50$); we define the array $H$ in three dimensions,
$N_{x}\times N_{y}\times N$, that will contain the following information:
\[
H(i,j,t)=\frac{\#\left\{ \mbox{processes s.t. at time \ensuremath{t} are in the cell grid \ensuremath{\left(i-1,i\right)\times\left(j-1,j\right)}}\right\}}{\mathrm{itermax}};
\]

\item For each iteration, we generate a Brownian motion on the $N$ time
points; then we generate a homogeneous Poisson process with intensity
$\lambda_{max}\geq\max_{x}\lambda(x)$ on $\left[0,T\right]$; let
$\left\{ T_{1},\ldots,T_{k}\right\} $ be the jump times;
\item We simulate, with Euler-Maruyama method, the stochastic process without
jumps, until the nearest time $T_{i}$;
\item Following the definition of the jump process and the intuitive interpretation
presented before, we decide (acceptance-rejection) if the jump of
the homogeneous process should be counted or not for the non-homogeneous
one: if not, we continue Euler-Maruyama until the next jump; if so,
we modify the population fractions in appropriate manner;
\item We update the array $H$.
\end{itemize}

\subsection{Replicator Dynamics perturbed by random motion.}
We take here that the population just moves randomly in space, subject to the selection of a changing environment. To do this, we imagine that jumps are absent, i.e.~$K=0$ in \eqref{Replicator mutator space x}, and that \eqref{Replicator mutator space} gives  an homogeneous Brownian motion for the variable $y$, i.e.~the drift $v$ is zero and the diffusion coefficient is $\sigma=0.2$.

If the Brownian motion were absent, the character of a population starting at $(x_0,y_0)$ would tend to
\[
x_{\infty}(y_0)=\begin{cases}
{G}/{C(y_0)} & y_{0}\geq-{\sqrt{3}}/{3} , \\
1 & y_{0}<{\sqrt{3}}/{3} .
\end{cases}
\]
as $t\rightarrow\infty$.
But now $y$ follows  an homogeneous Brownian motion, its marginal density  will
therefore be a Gaussian function with expected value $y_{0}$, kernel
of the heath equation,
\[
\varrho_{(y)}(y,t)=\frac{1}{\sqrt{2\pi\sigma^{2}t}}\exp\left\{ -\frac{\left(y-y_{0}\right)^{2}}{2\sigma^{2}t}\right\} .
\]
Meanwhile the SDE \eqref{Replicator mutator space x} reduces to the standard replicator dynamics and moves $x(t)$ towards
the asymptotically stable equilibrium, which in turns depends by $y$. We
can see how, with $t\gg0$, the density is approximately
\[
\varrho(x,y,t)\sim x_{\infty}(y)\varrho_{(y)}(y,t)
\]
with an approximately total number of hawks of $\int_{\mathbb{R}}x_{\infty}(y)\varrho_{(y)}(y,t)dy$.

\begin{figure}[htp]
\begin{center}

\begin{tabular}{cc}
\includegraphics[scale=0.45]{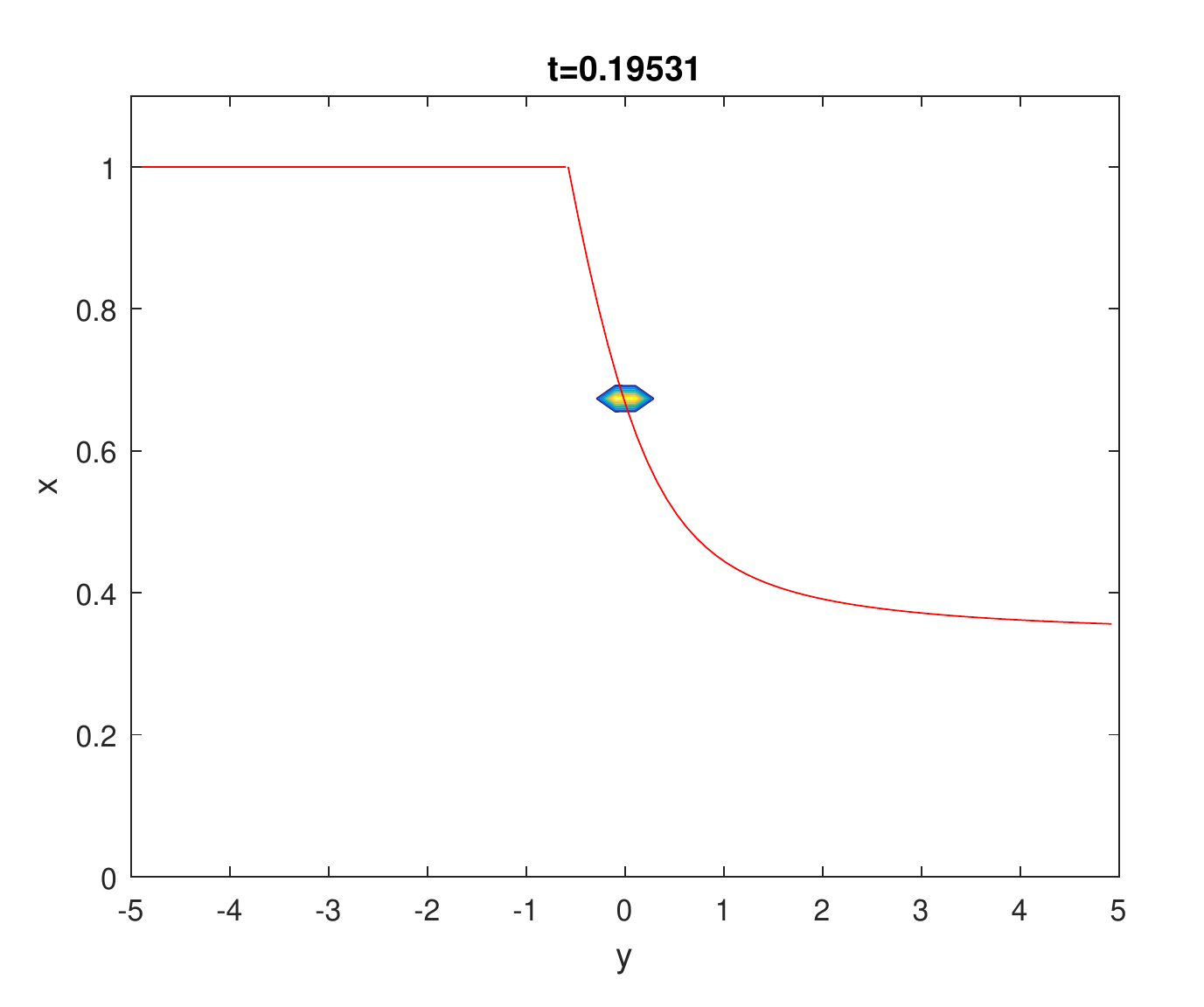}
&
\includegraphics[scale=0.45]{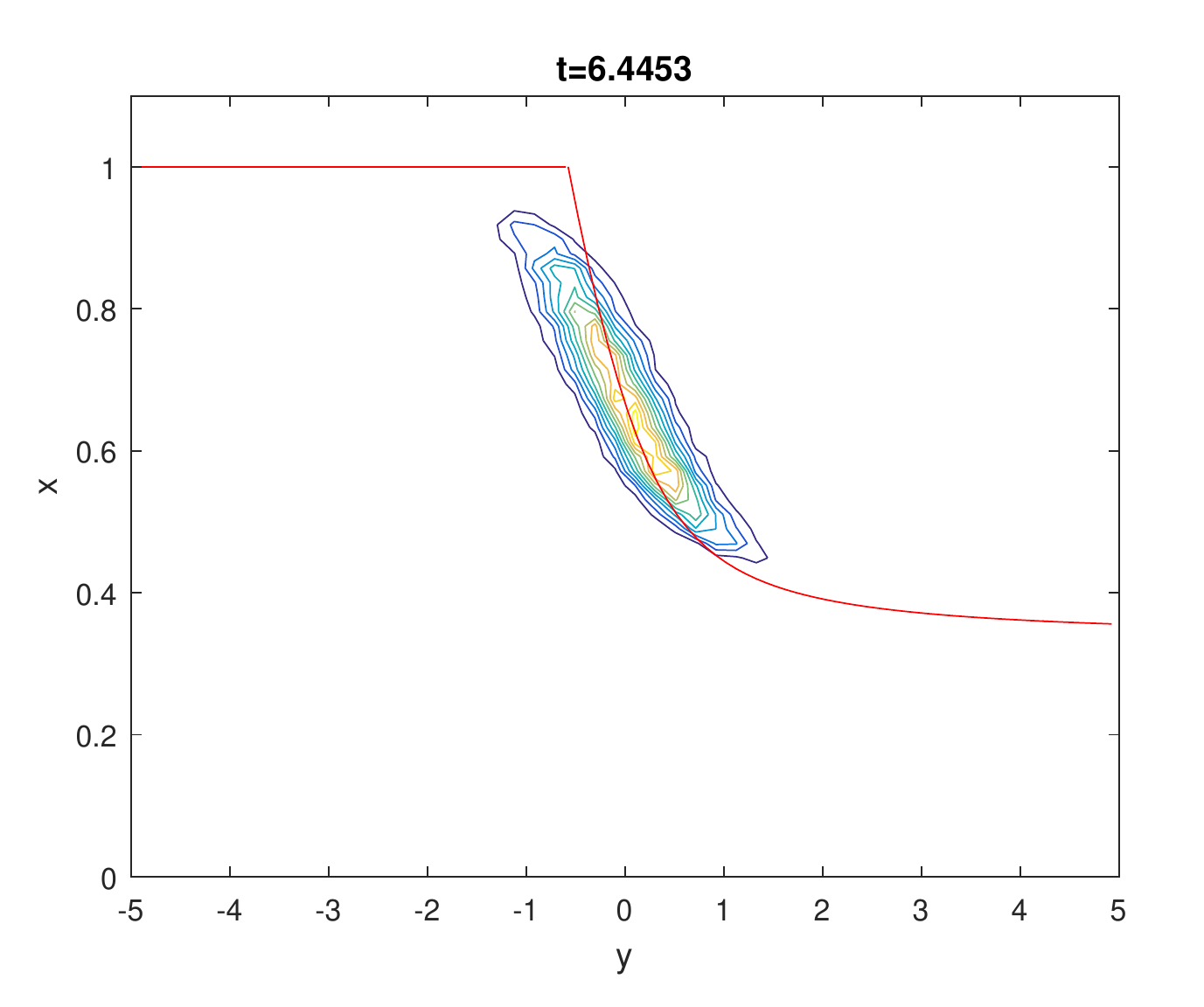}
\\
\includegraphics[scale=0.45]{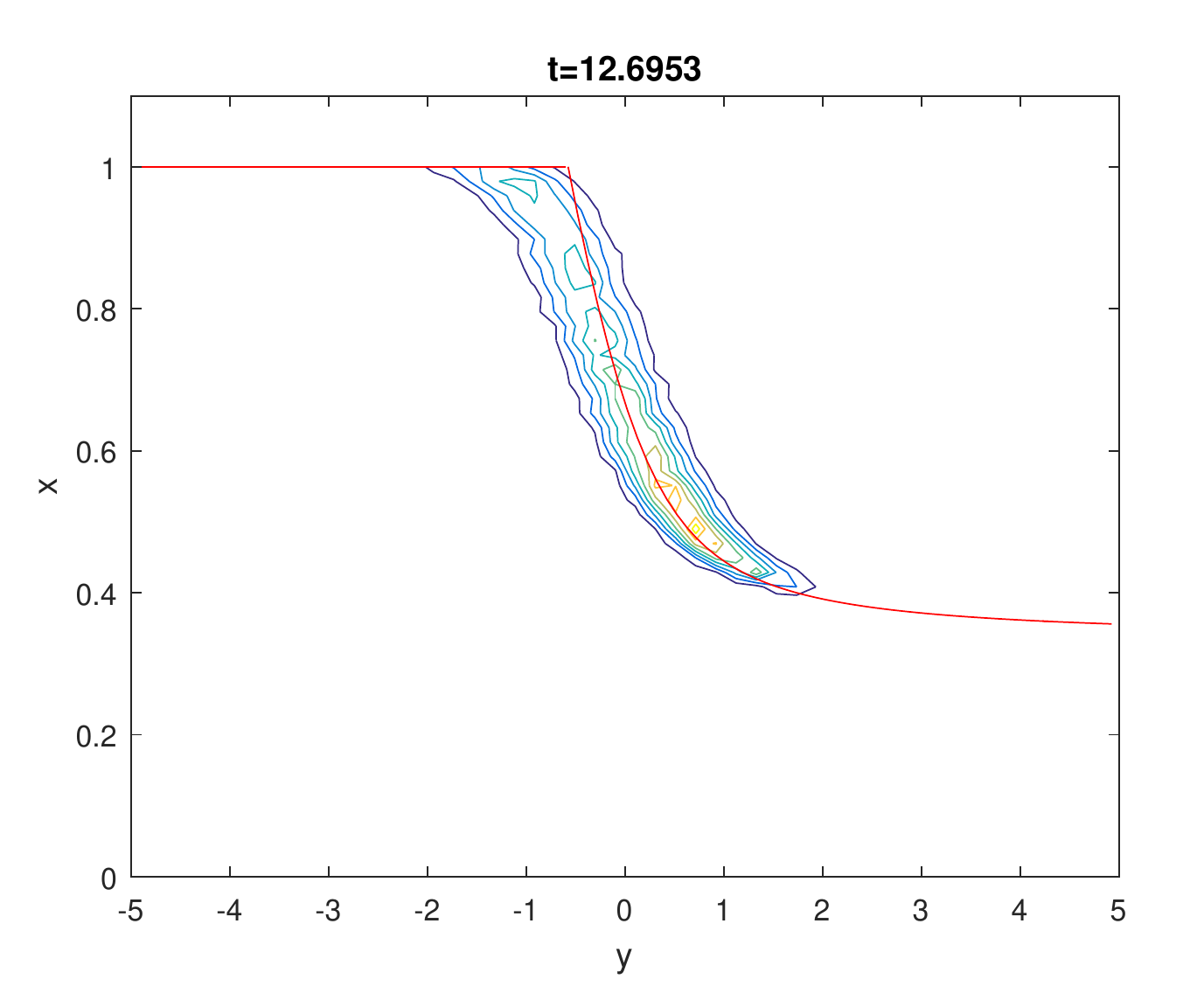}
&
\includegraphics[scale=0.45]{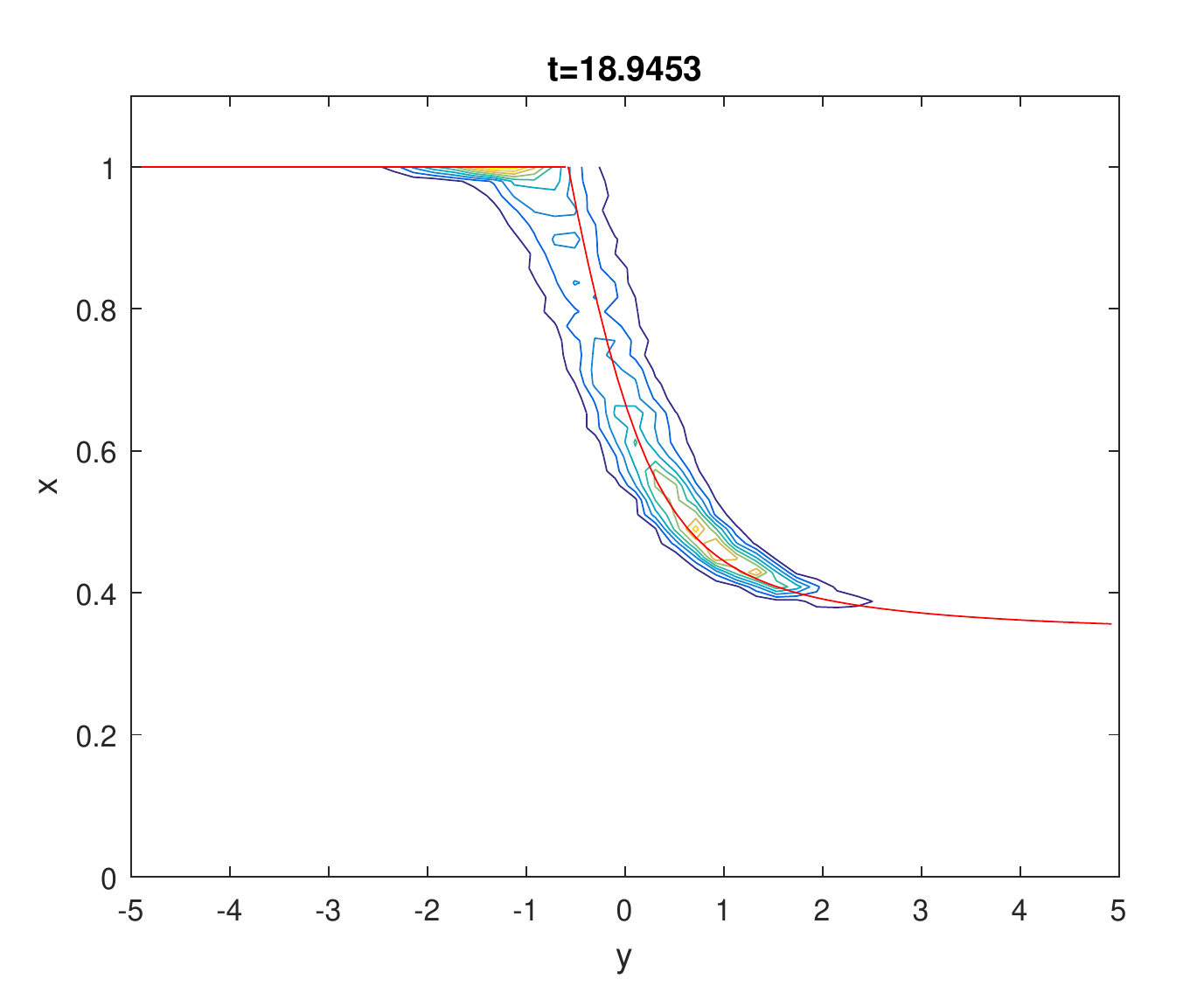}
\\
\includegraphics[scale=0.45]{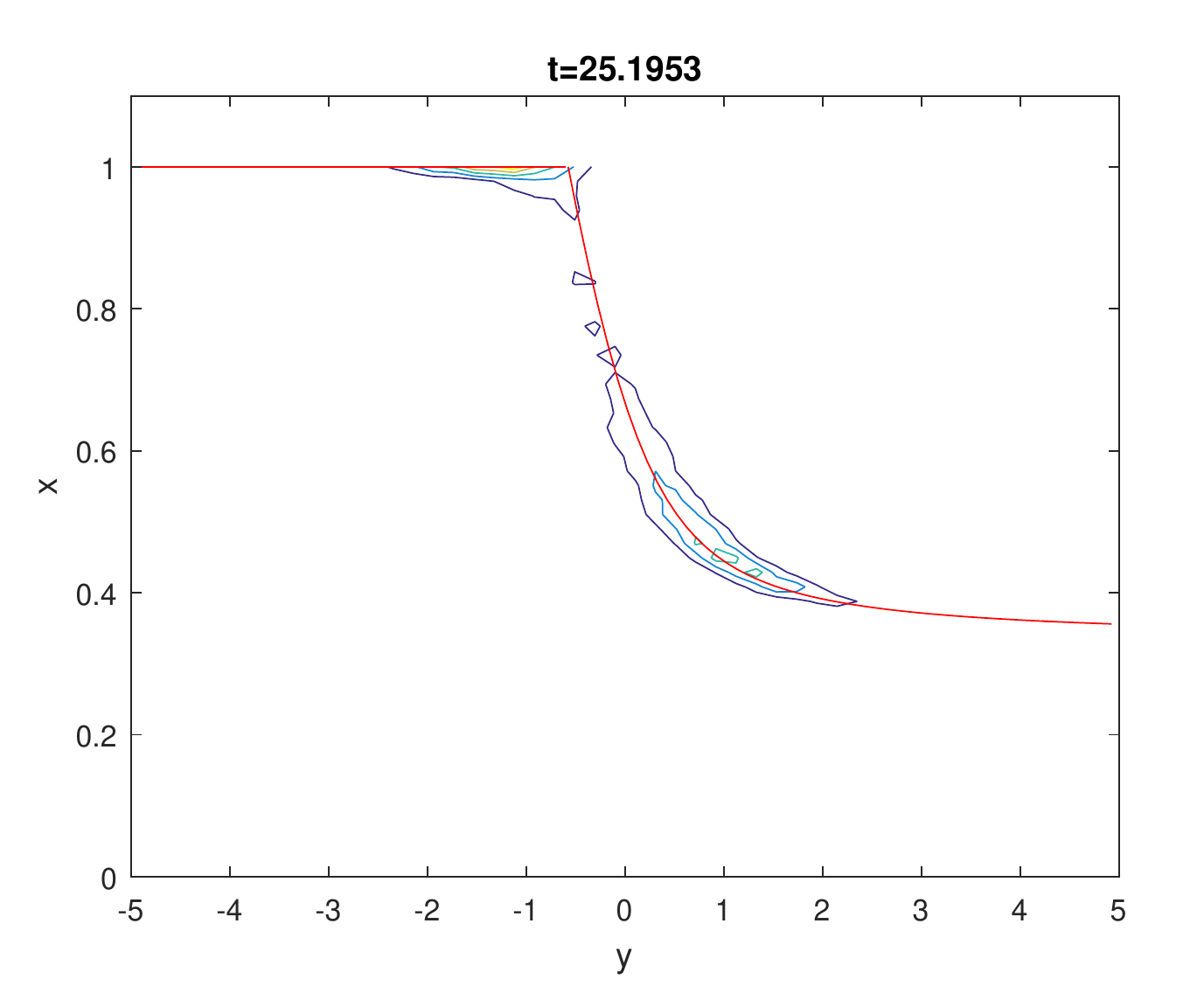}
&
\includegraphics[scale=0.45]{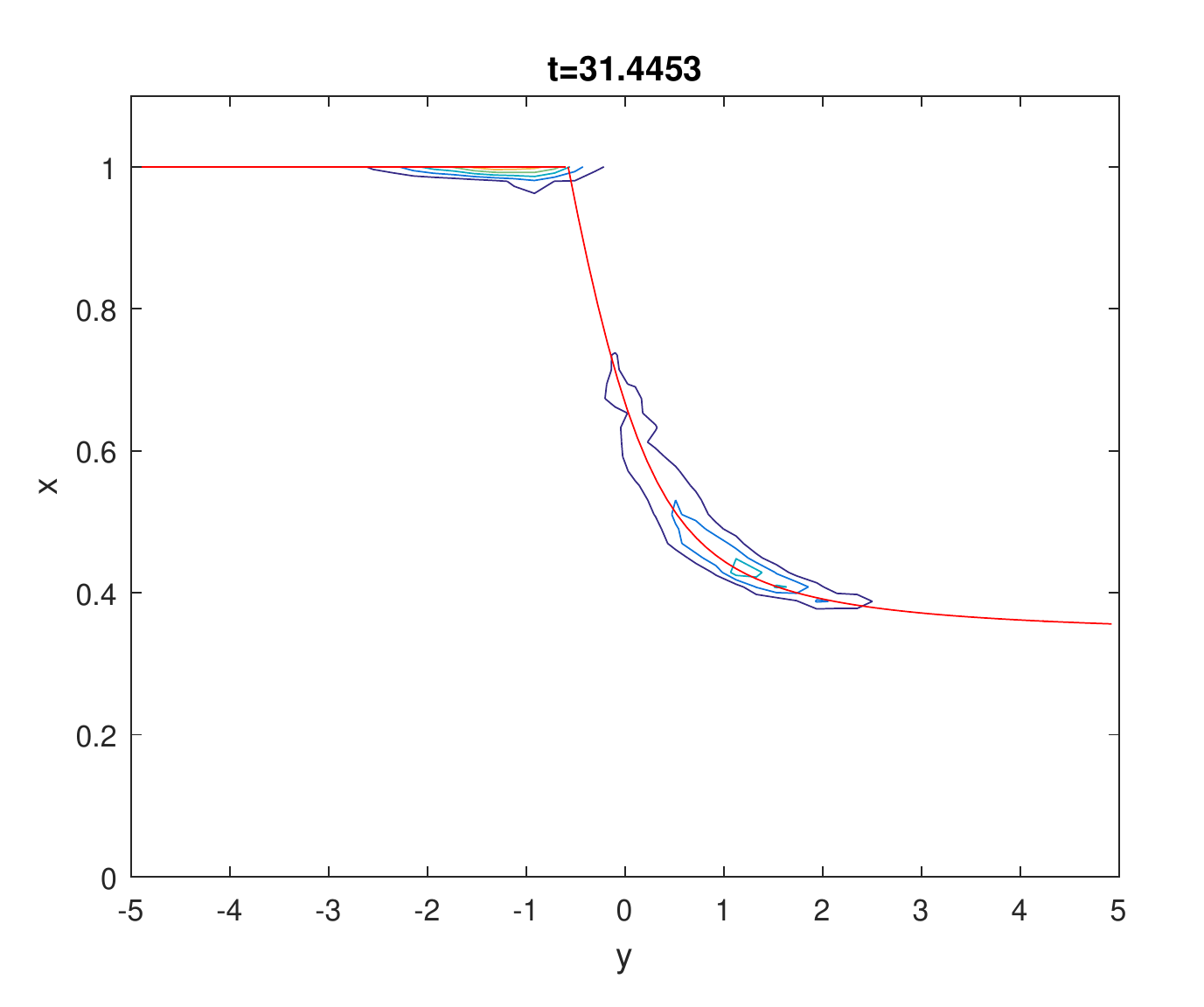}

\end{tabular}

\end{center}
\caption{Replicator Dynamics perturbed by random motion, simulated by  $10^5$ iterations. The population moves randomly in space, subject to the selection of a changing environment. The deterministic speed is zero, $v=0$, 
jumps are absent, the coefficient of the Brownian motion is $\sigma=0.2$. The red line is the function $x_{\infty}(y)$. Other parameters: $T=35$, $N=2^{8}$, $y_{min}=-5$, $y_{max}=5$, $N_{x}=N_{y}=50$.}

\end{figure}

\subsection{Replicator dynamics plus  Brownian motion with drift.}\label{rd+b}
We assume again that the character $x$ follows the replicator dynamics,  i.e. we take $K=0$ in \eqref{Replicator mutator space x}.
But now the position $y$ changes with nonnull drift, depending on the character of the population: 
\begin{equation}\label{drift}
v(x)=1-{3}x/2.
\end{equation}
The drift is decreasing as a function of $x$ (the proportion of hawks):
it has its maximum, $v=1$, at $x=0$
(Doves try to go right) and its minimum, $v=-{1}/{2}$, in $x=1$
(Hawks try to go left).
Moreover the drift is null at the equilibrium state $x={2}/{3}$, which is taken as the initial state; 
although the resulting system \eqref{Replicator mutator space x}, \eqref{Replicator mutator space} is not trivial because of the presence of the Brownian component with $\sigma=0.1$.

The presence of random motion leads to the formation of two
different regions: one group will continue to move leftward, the other rightward.
In the first one the proportion of hawks  will increase, until it overstep $y=-{\sqrt{3}}/{3}$,
after which the only new equilibrium will be $x=1$; then we will
see the gradual extinction of each dove.
In the other one however the cost of fight will increase with time,
the density will tend to concentrate 
toward the coexistence of both strategies, with greater concentration
of doves. 

As we can see numerically, the two regions have each mass ${1}/{2}$:
this means that the global fraction of hawks remains unchanged, although ``geographically'' the situation is very different from the beginning.

\begin{figure}[htp]
\begin{center}

\begin{tabular}{cc}
\includegraphics[scale=0.45]{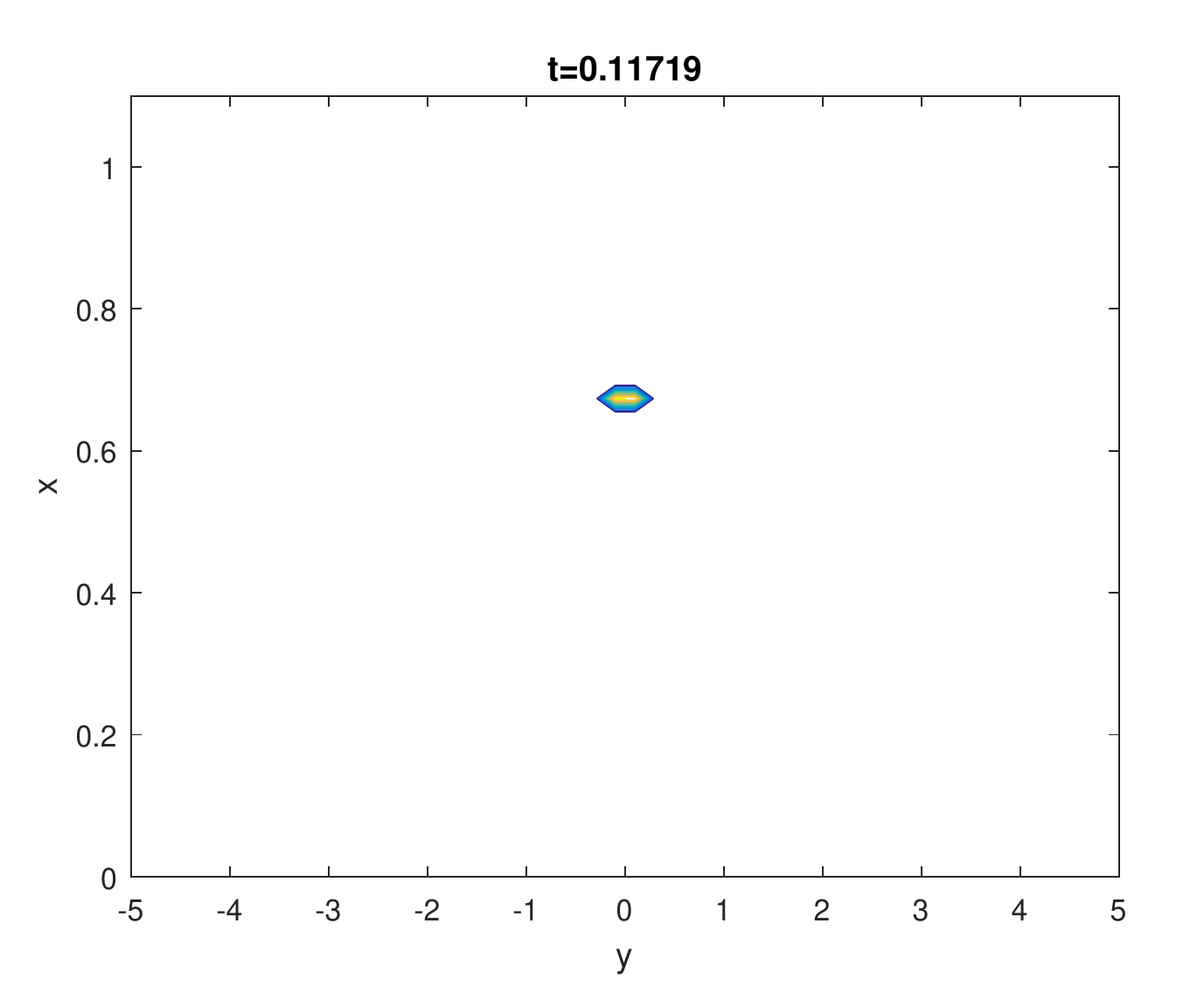}
&
\includegraphics[scale=0.45]{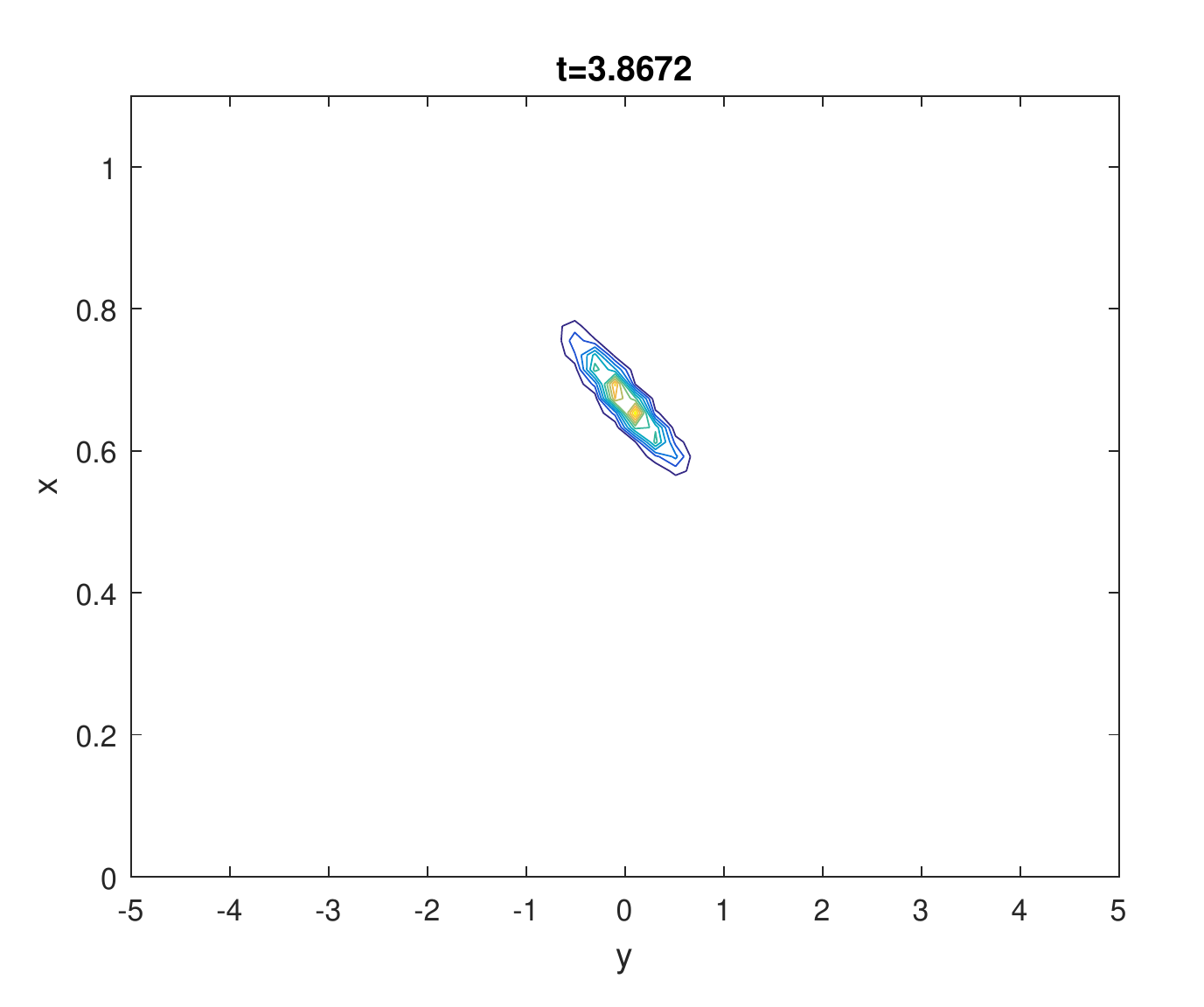}
\\
\includegraphics[scale=0.45]{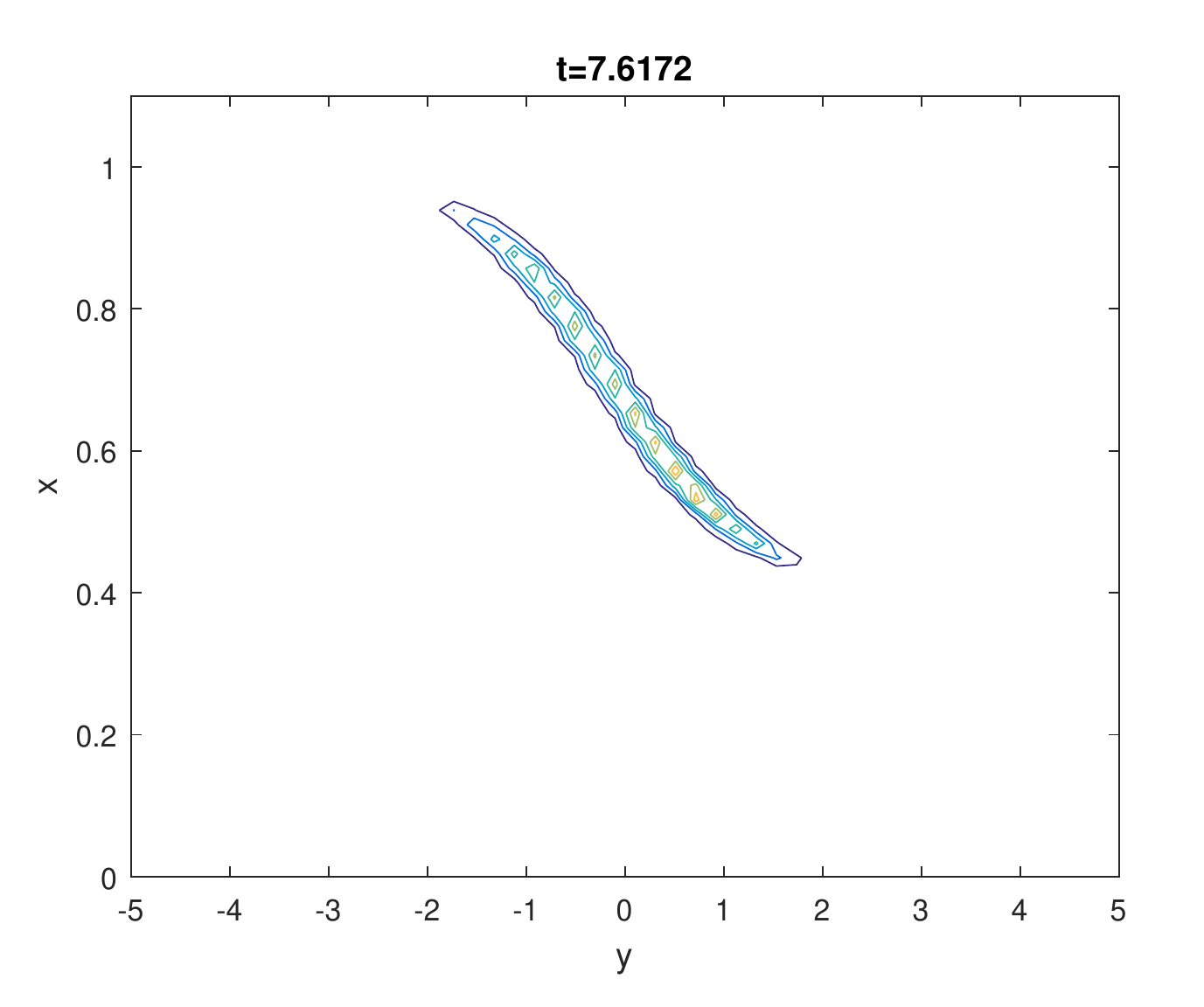}
&
\includegraphics[scale=0.45]{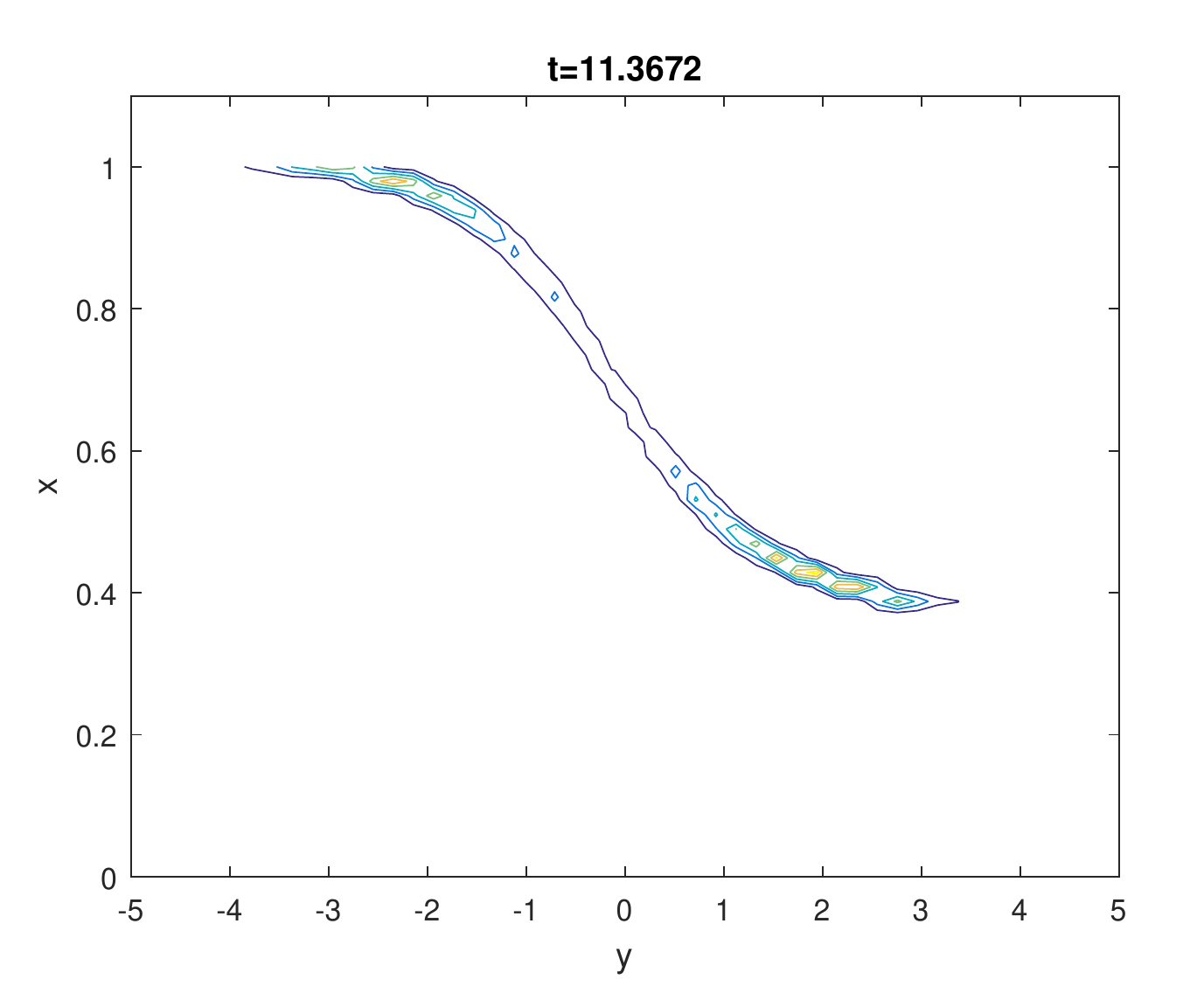}
\\
\includegraphics[scale=0.45]{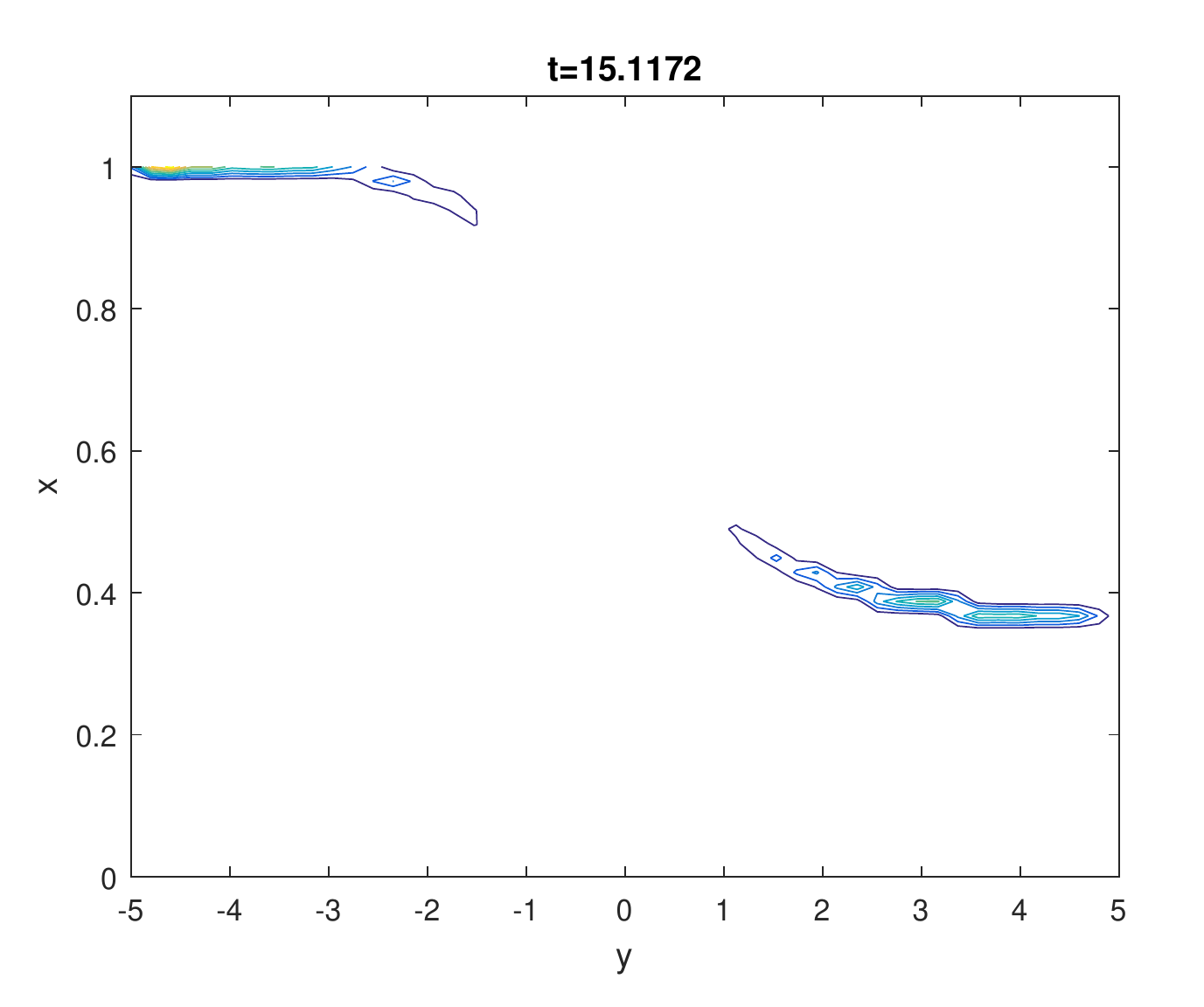}
&
\includegraphics[scale=0.45]{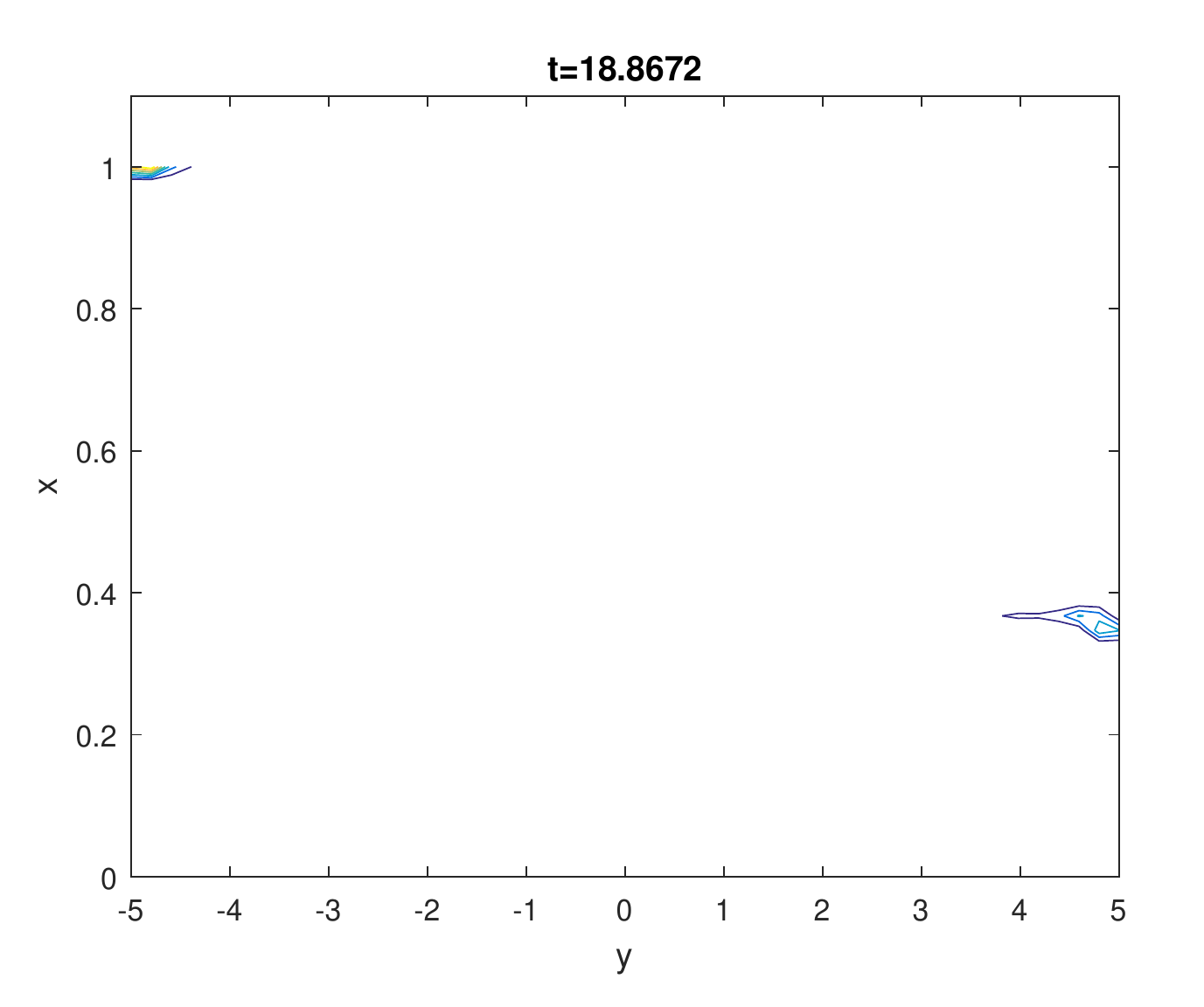}
\\
\includegraphics[scale=0.45]{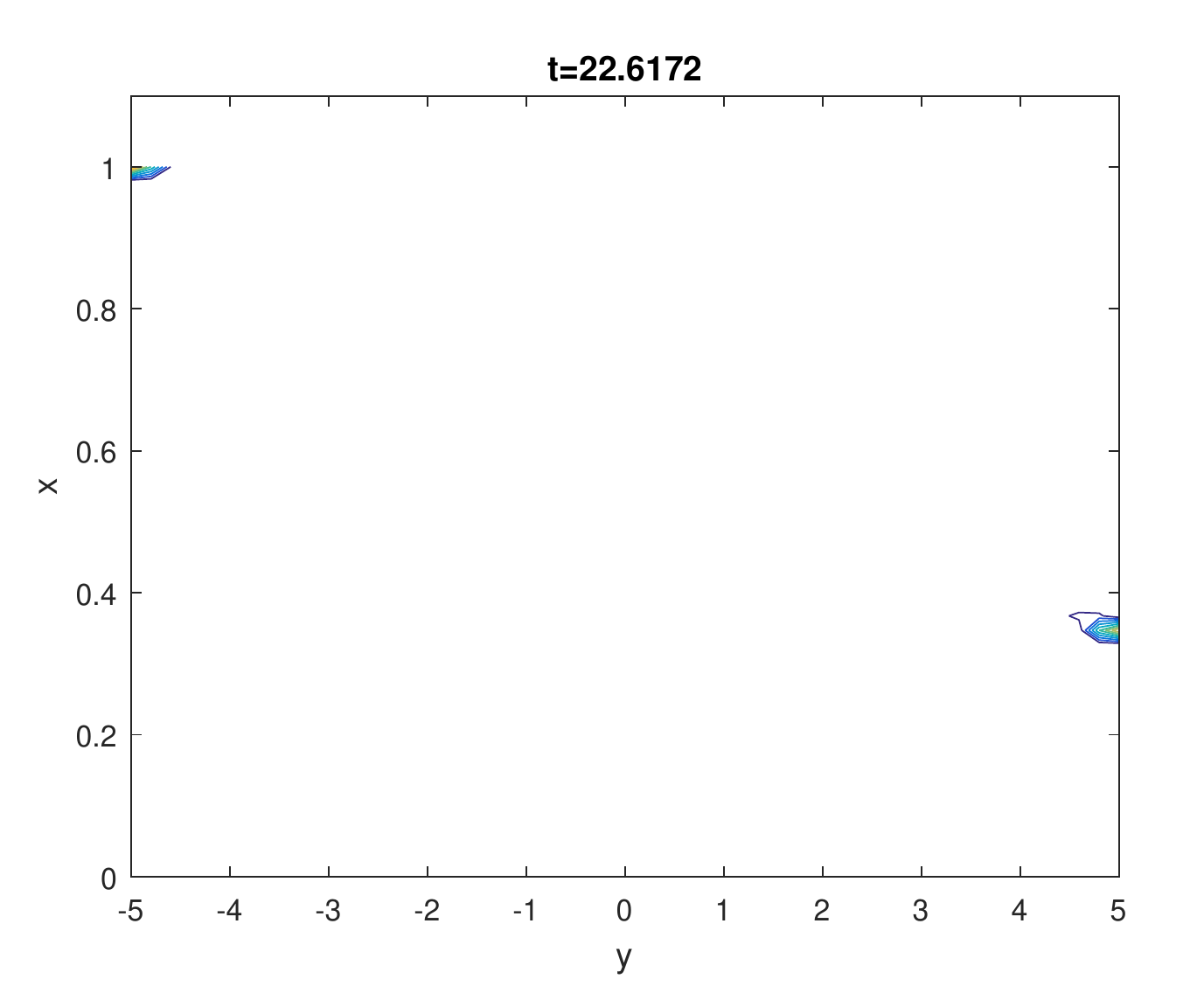}
&
\includegraphics[scale=0.45]{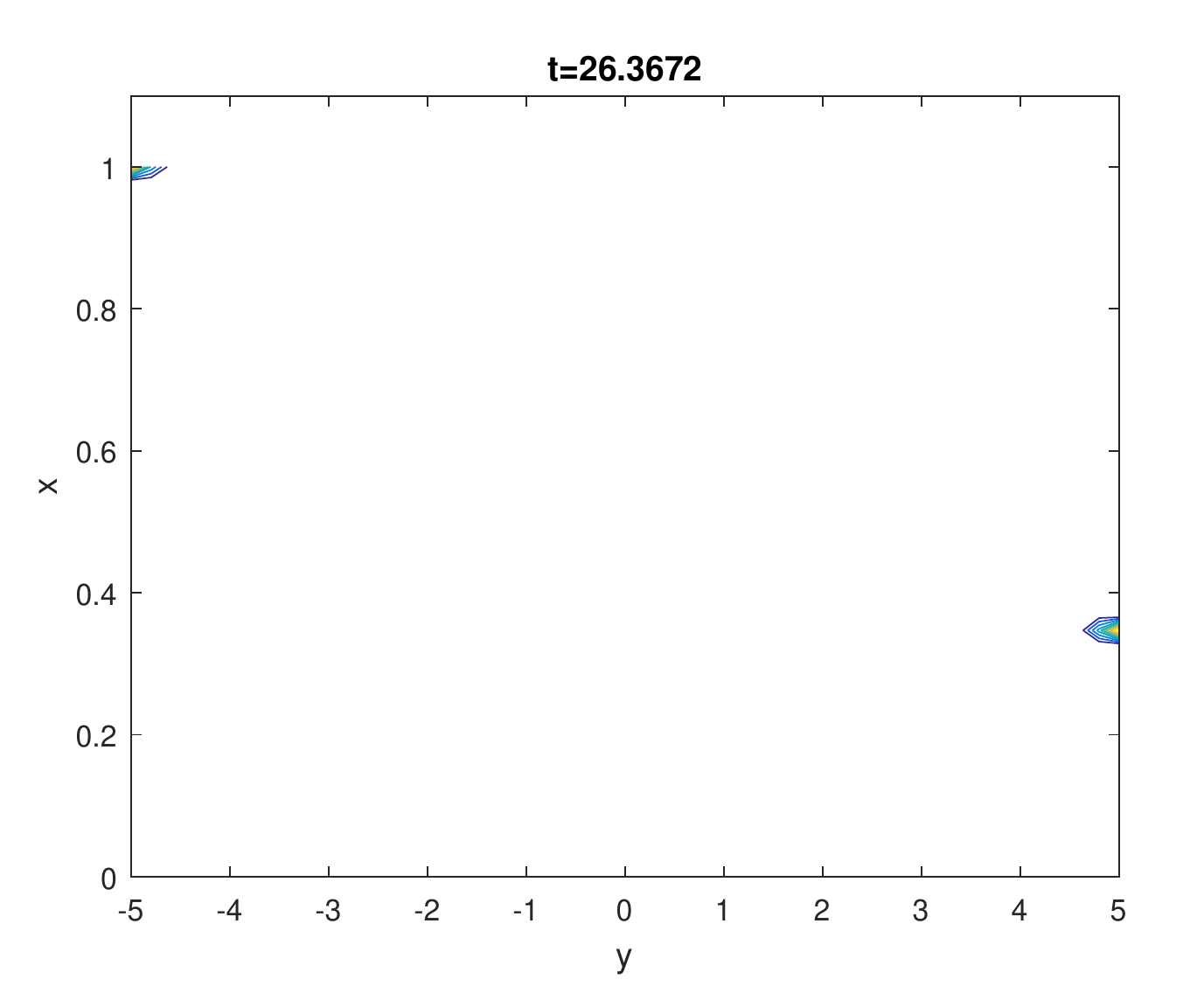}

\end{tabular}

\end{center}
\caption{Replicator dynamics plus  Brownian motion with drift, simulated by $10^5$ iterations. The vector $x$ evolves according to the game Hawks vs Doves with $G$  function of $y$. 
The deterministic speed is chosen as $v=1-\frac{3}{2}x$, jumps are absent, the coefficient of the Brownian motion is $\sigma=0.1$. Other parameters: $T=30$, $N=2^{8}$, $y_{min}=-5$, $y_{max}=5$, $N_{x}=N_{y}=50$.}

\end{figure}

\subsection{Point-type mutations plus deterministic motion}
We take a point-type mutation process for $x$, with $\lambda_{12}=\lambda_{21}=0.2;\,\gamma_{12}=\gamma_{21}=0.1$ in \eqref{K}.
Concerning motion, we take here $\sigma=0$, so that the position $y$ changes deterministically with speed $v$ given by \eqref{drift}.
Let us remark that at each time that a mutation occurs, the probability that hawks (respectively doves) suffer a mutation only depends by fitness.
Starting from the initial equilibrium $(2/3,0)$, the probability that hawks are the first to suffer mutations is $1/2$, just like doves.
In this sense mutations make up a random perturbation similar to the Brownian motion introduced in the previous example \ref{rd+b}. Simulations show two moving regions also in this case, indeed.
It is remarkable that the very fact that at the equilibrium  hawks are more abundant then doves brings as a consequence that mutations will favor doves, so that  the region moving rightwards will have higher mass and the total population of hawks will decrease, unlike  example \ref{rd+b}.
We therefore see that including the physical space can favour the persistence of the low-fitness strategy, when mutations can happen in both directions.
\begin{figure}[htp]
\begin{center}

\begin{tabular}{cc}
\includegraphics[scale=0.45]{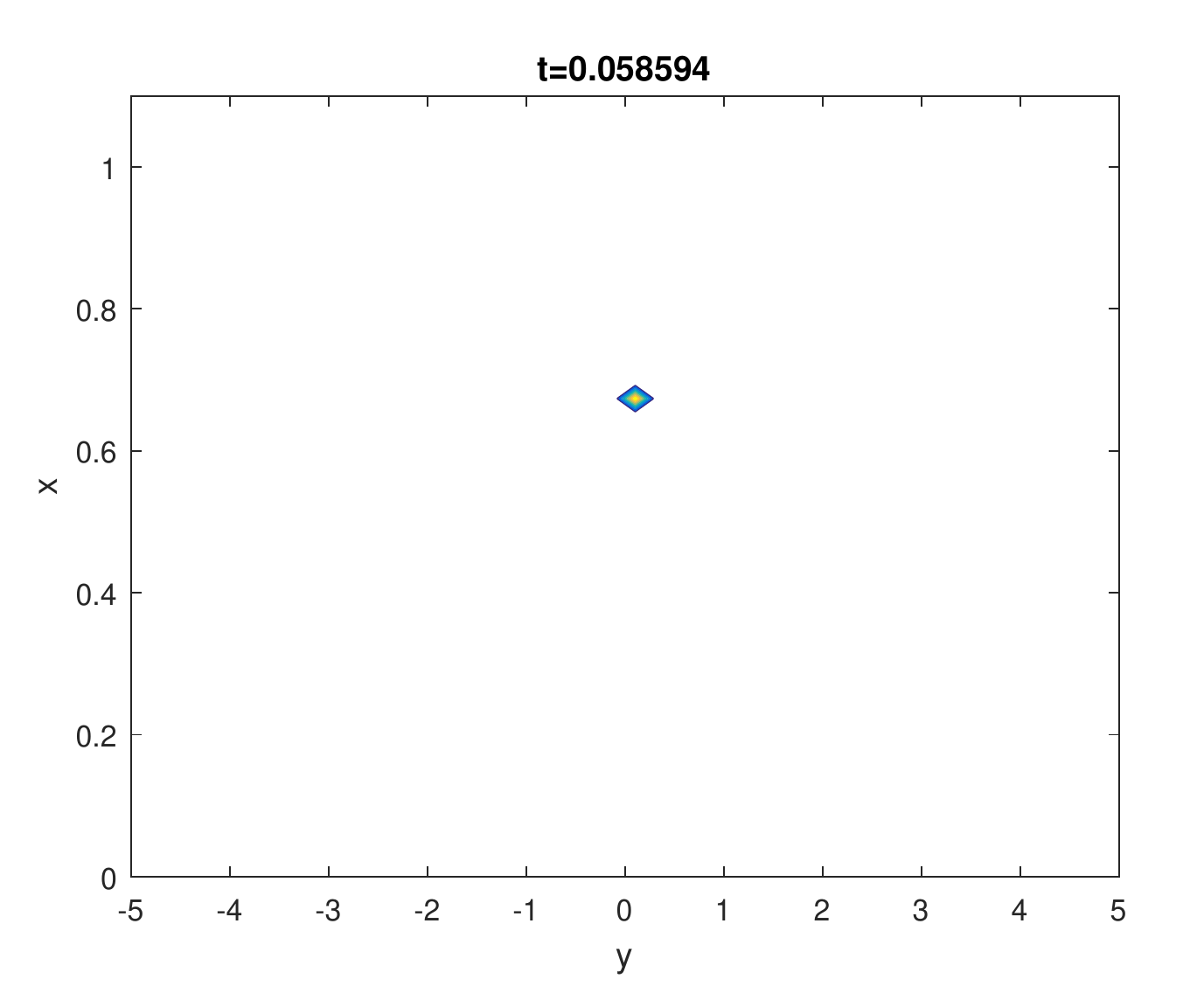}
&
\includegraphics[scale=0.45]{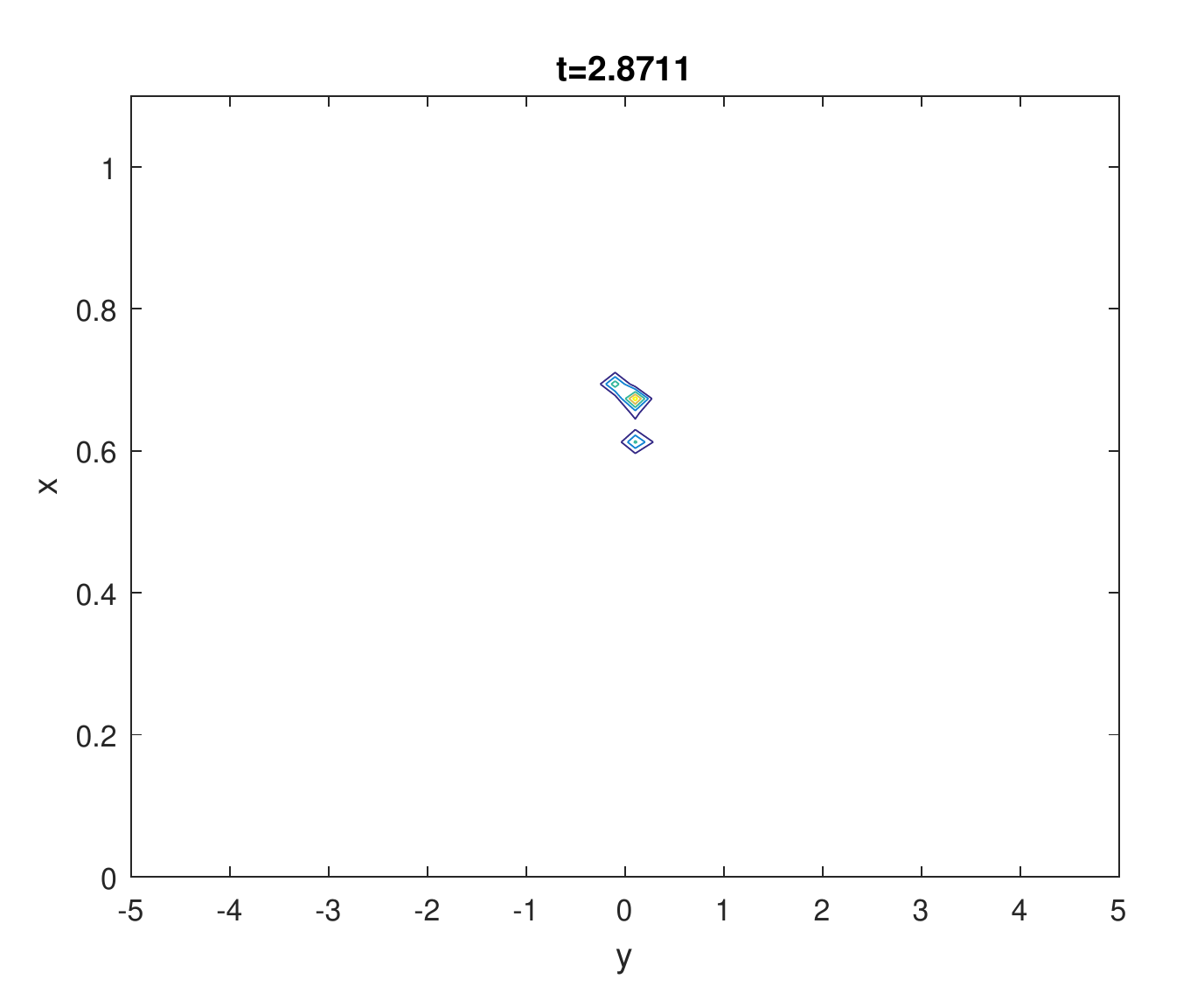}
\\
\includegraphics[scale=0.45]{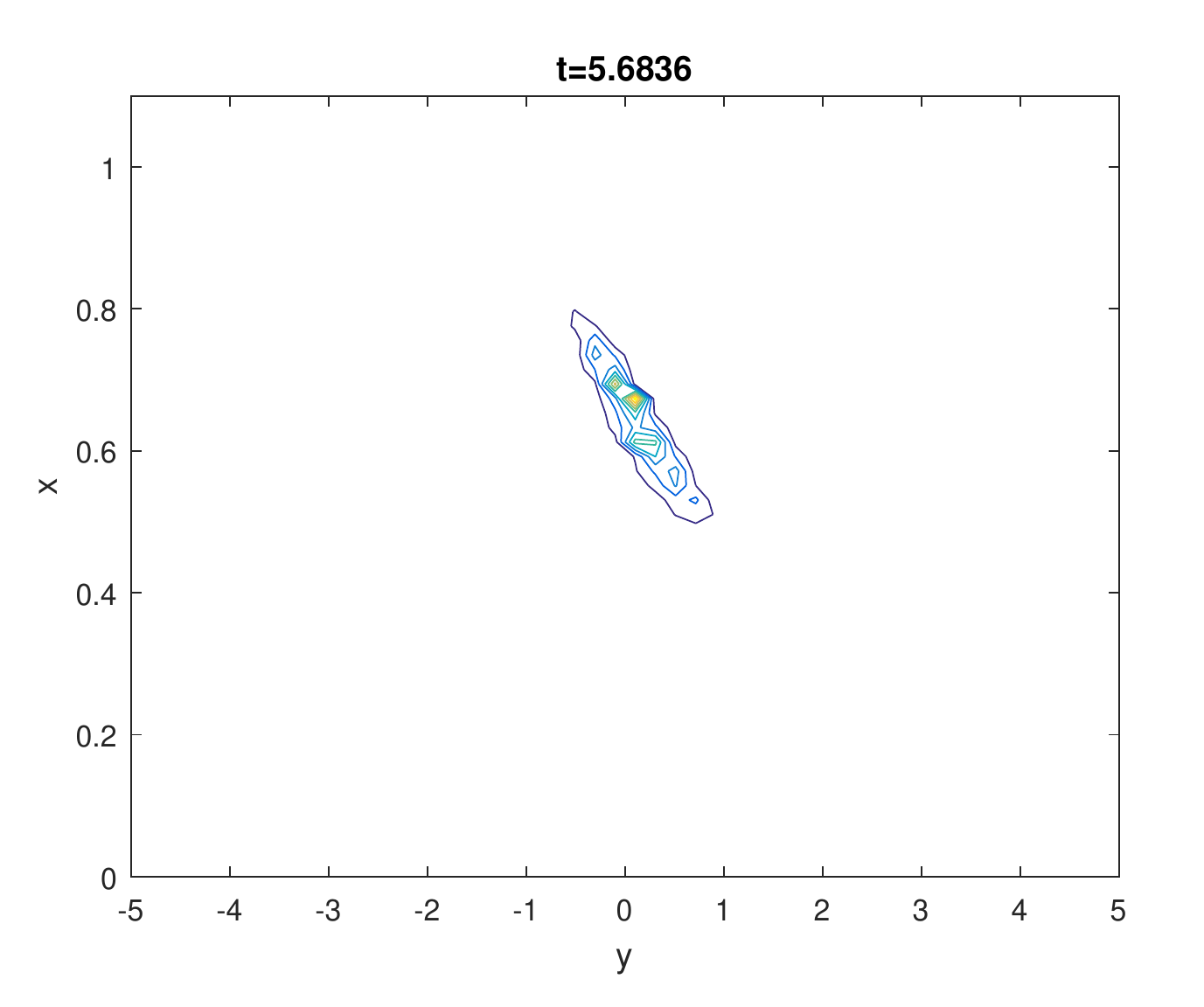}
&
\includegraphics[scale=0.45]{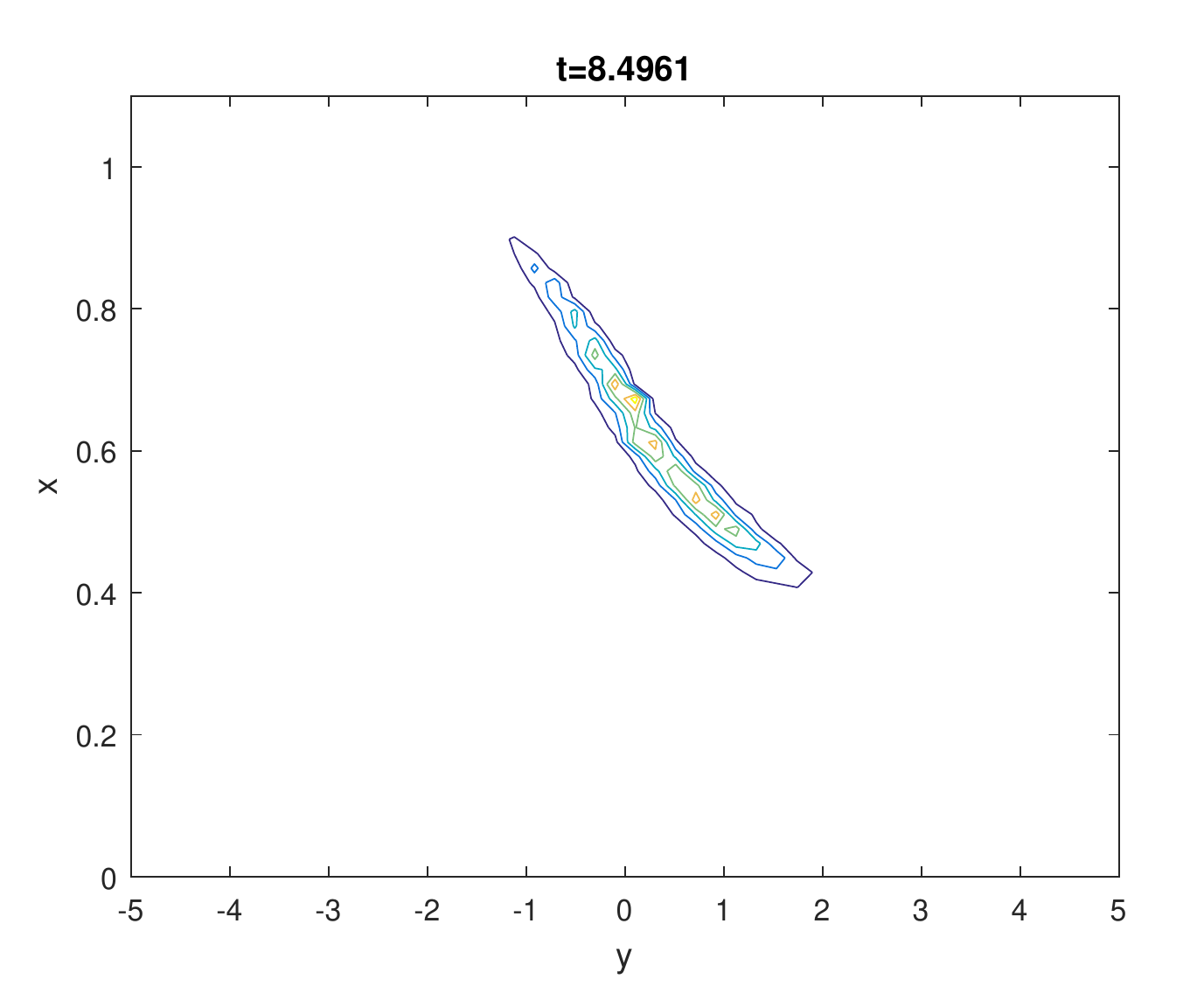}
\\
\includegraphics[scale=0.45]{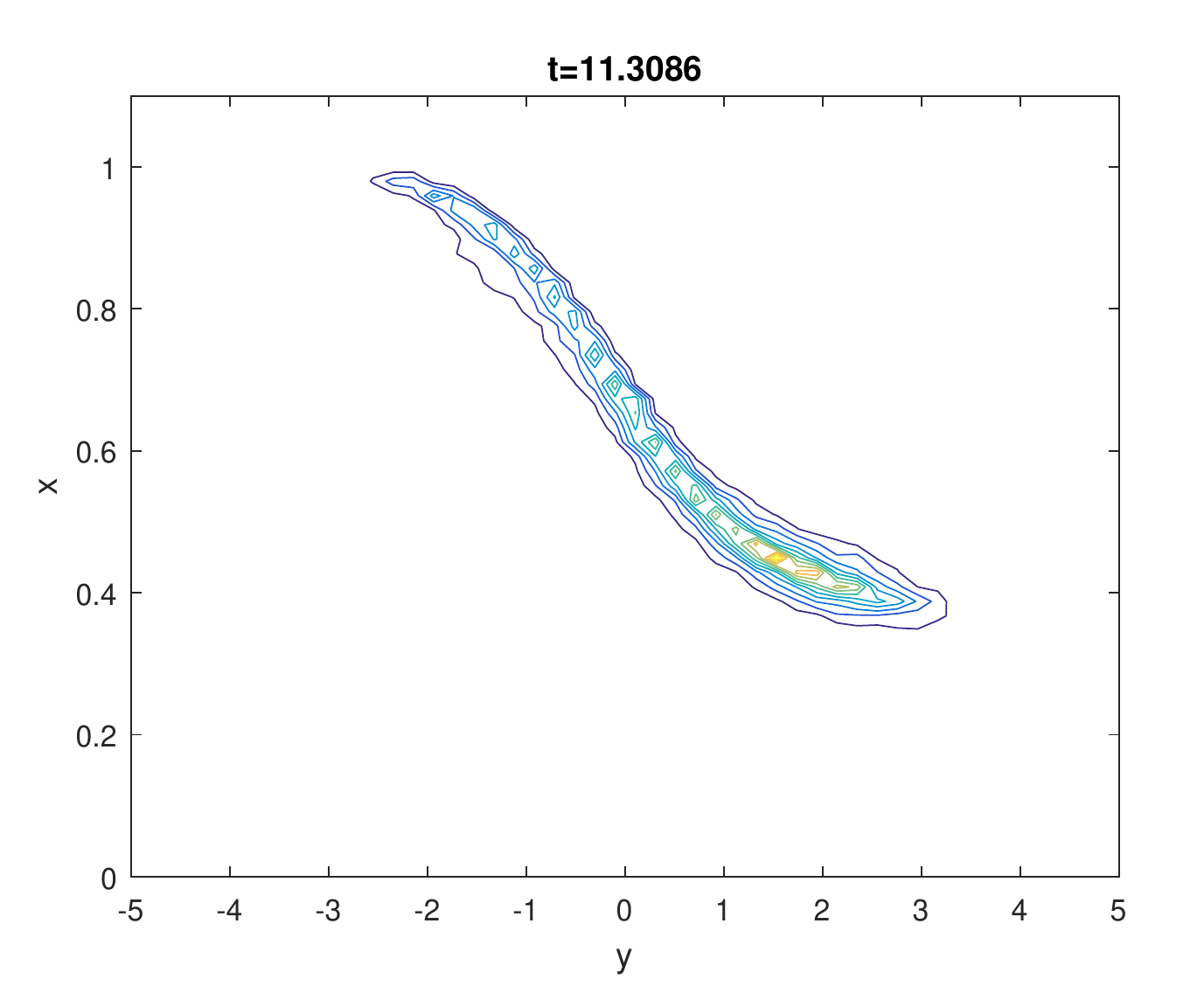}
&
\includegraphics[scale=0.45]{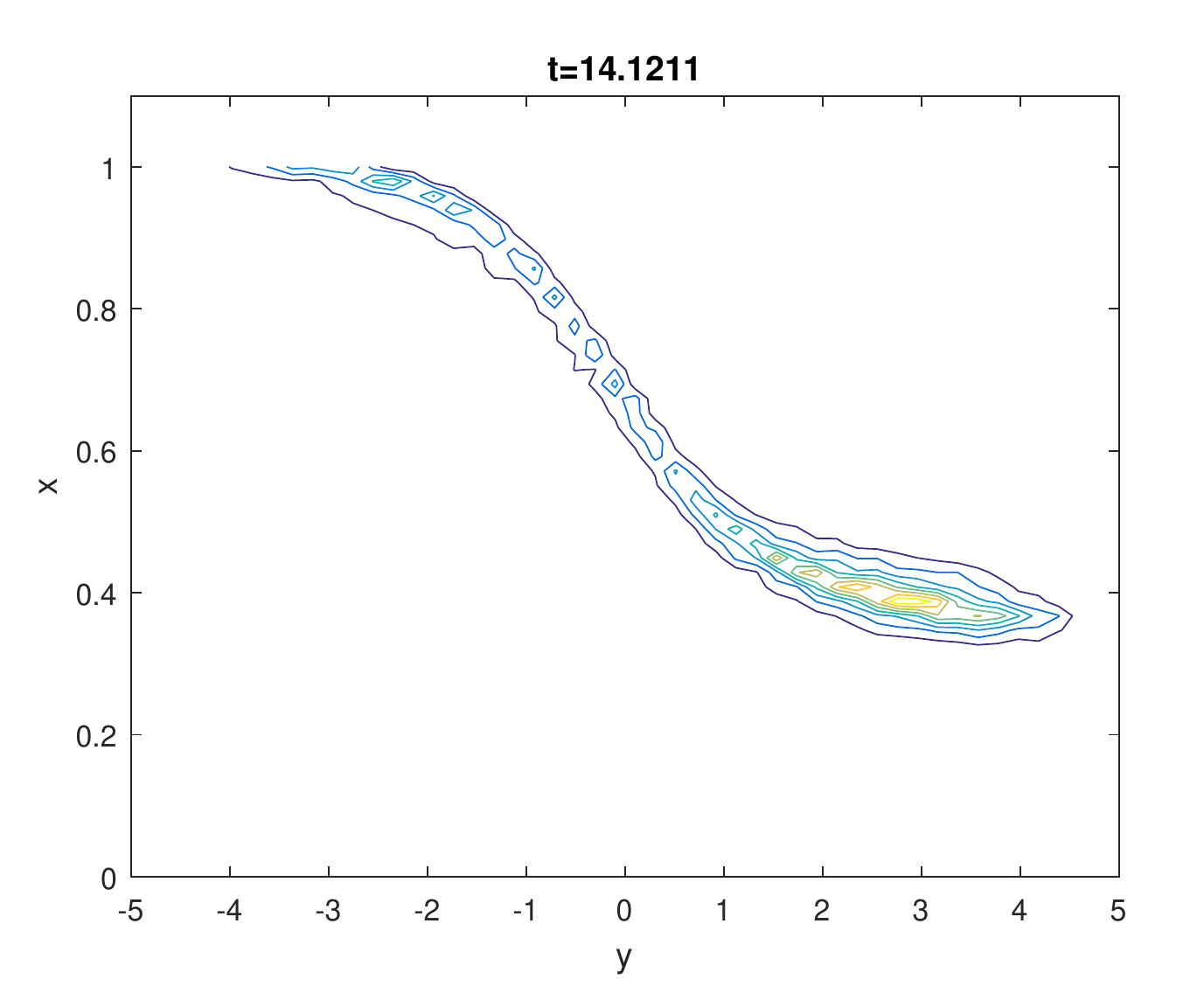}
\\
\includegraphics[scale=0.45]{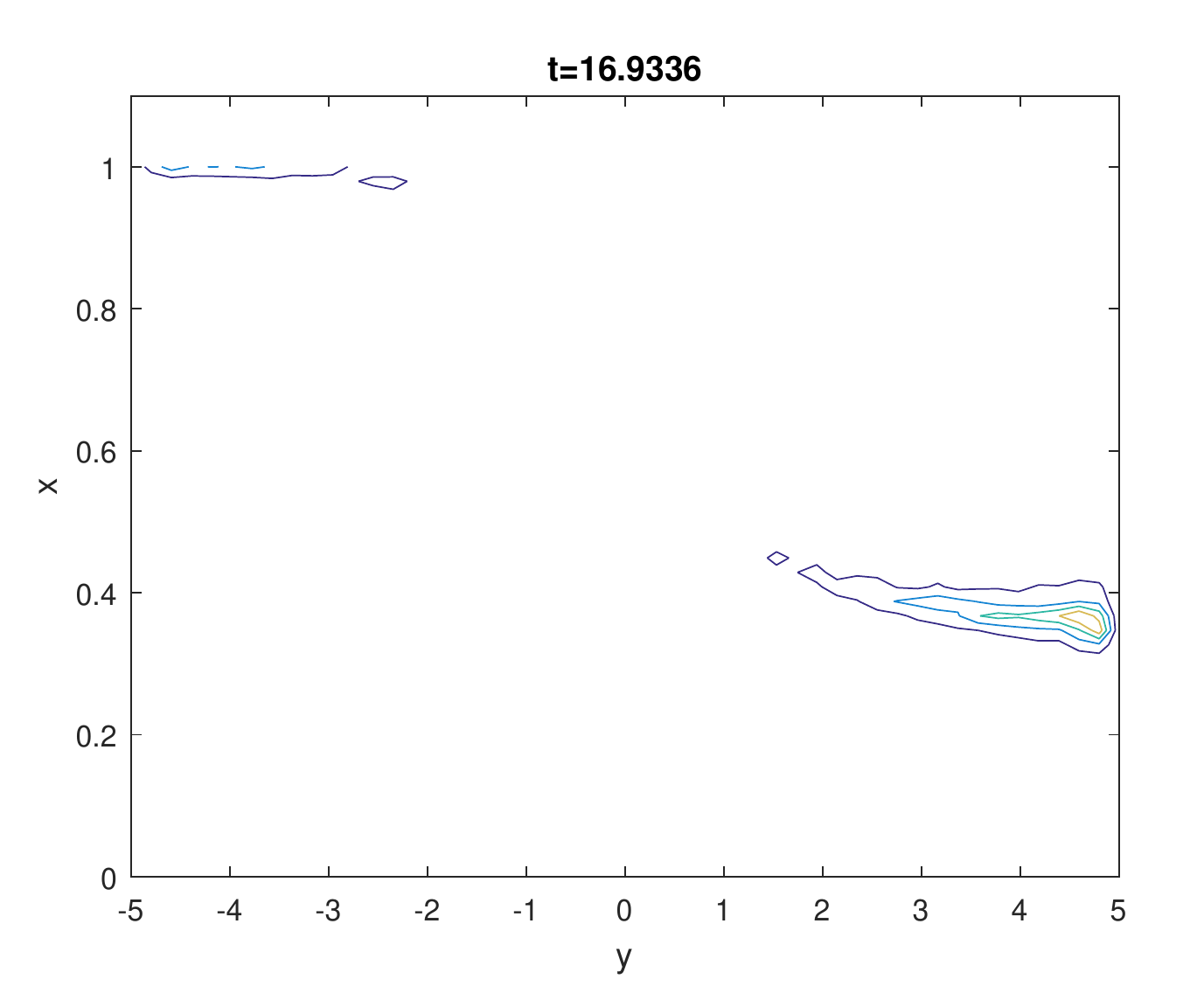}
&
\includegraphics[scale=0.45]{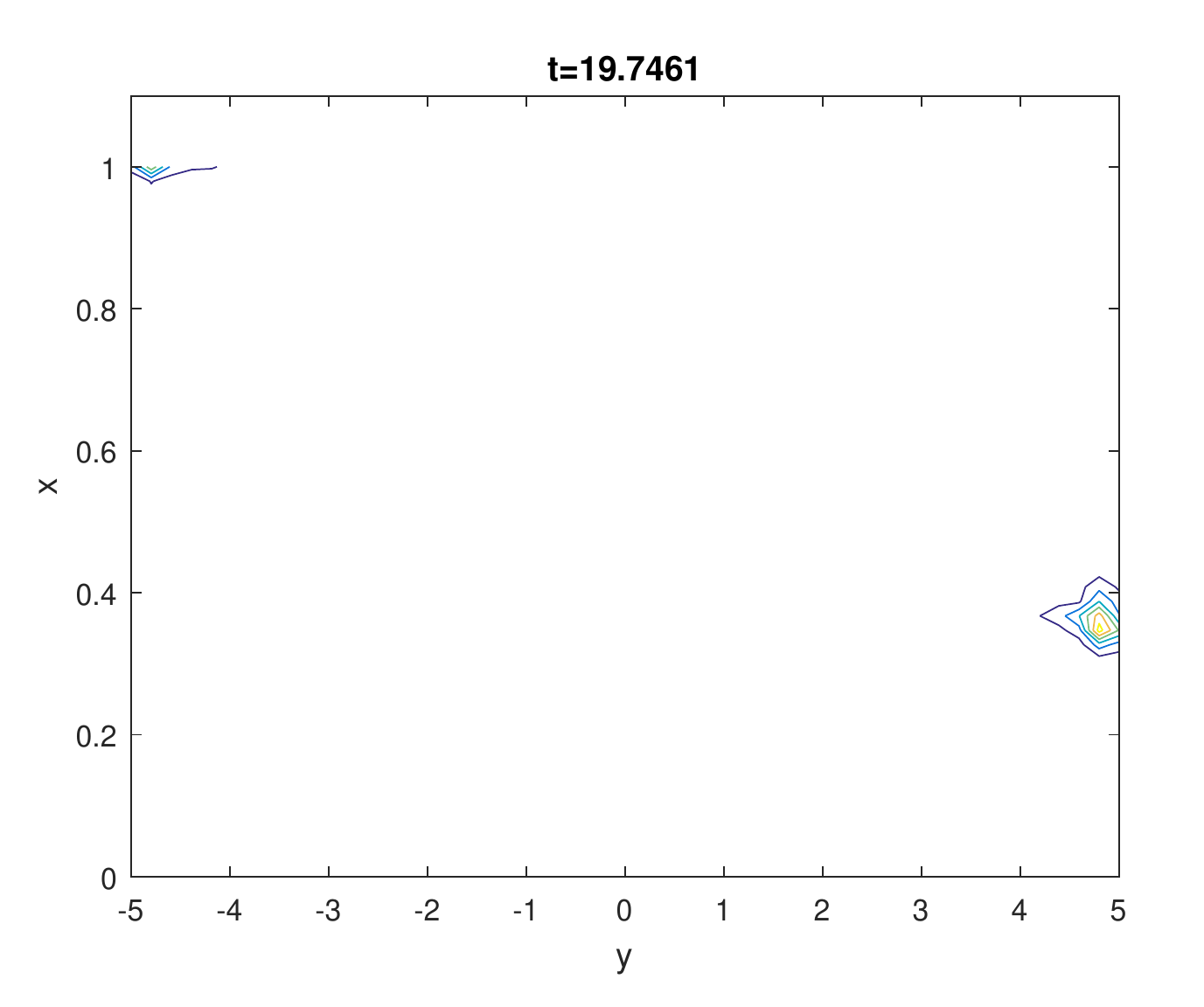}
\\
\end{tabular}

\end{center}
\caption{Point-type mutations plus deterministic motion, simulated by $10^5$ iterations. The vector $x$ evolves according to the game Hawks vs Doves with $G$ function of $y$. 
The deterministic speed is chosen as $v=1-\frac{3}{2}x$, Brownian motion is absent, the parameters of the jump process are $\lambda_{12}=\lambda_{21}=0.2$, $\gamma_{12}=\gamma_{21}=0.1$,
that is one tenth of the population mutate each jump and we have “fair jumps”. Other parameters: $T=30$, $N=2^8$, $\alpha=0.1$, $y_{min}=-5$, $y_{max}=5$, $N_{x}=N_{y}=50$.}
\end{figure}

\subsection{Point-type mutations plus Brownian motion with drift.}
In this last simulation, we choose as parameters $\sigma =0.1$, $\lambda_{12}=\lambda_{21}=0.2,\,\gamma_{12}=\gamma_{21}=0.1$. We do not assist in a substantial change of the dynamic, which looks similar to the two previous cases, with the formation of two regions of different mass. We then tested the convergence of the numerical method by changing the number of samples, doubling them in each of 11 simulations, using 100, 200, 400, 800, 1600, 3200, 6400, 12800, 25600, 51200, 102400 points. We define the error $e_{n}$ as
\begin{equation}
e_{n}=\left|\varrho^{(2n)}\left(\cdot,\cdot,\frac{T}{2}\right)-\varrho^{(n)}\left(\cdot,\cdot,\frac{T}{2}\right)\right|_{1},
\end{equation}
and we suppose there exist $c\in\R$ and $\gamma>0$ such that, definitely, $e_{n}=cn^{-\gamma}$. We then estimate the order $\gamma$ with $\gamma=\log_{2}\left(\frac{e_{n}}{e_{2n}}\right)$. 
The approximation of $\varrho(\cdot,\cdot,\frac{T}{2})$, named $H(\cdot,\cdot,\frac{T}{2})$ in the program, is an array defined on grid $G$ made of $N_{x}\times N_{y}$ rectangles of area $A=\frac{(y_{max}-y_{min})}{N_{x}N_{y}}$;  
then we can calculate the error 
\begin{equation}
e_{n}\sim A\cdot\sum_{i,j}\left|H^{(2n)}(i,j,\frac{T}{2})-H^{(n)}(i,j,\frac{T}{2})\right|.
\end{equation}
Results are shown in figure.

\begin{figure}[htp]
\begin{center}

\begin{tabular}{cc}
\includegraphics[scale=0.45]{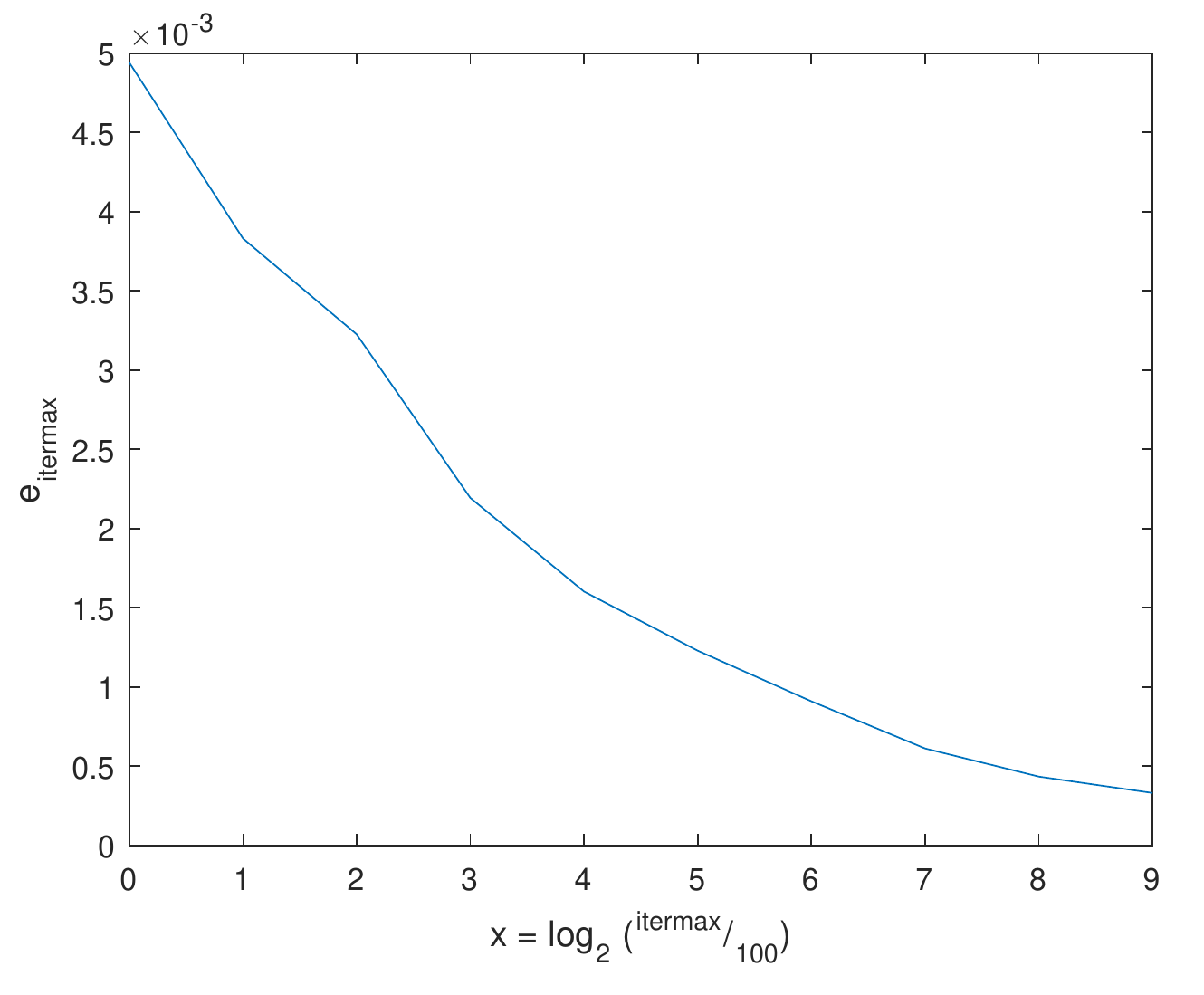}
&
\includegraphics[scale=0.45]{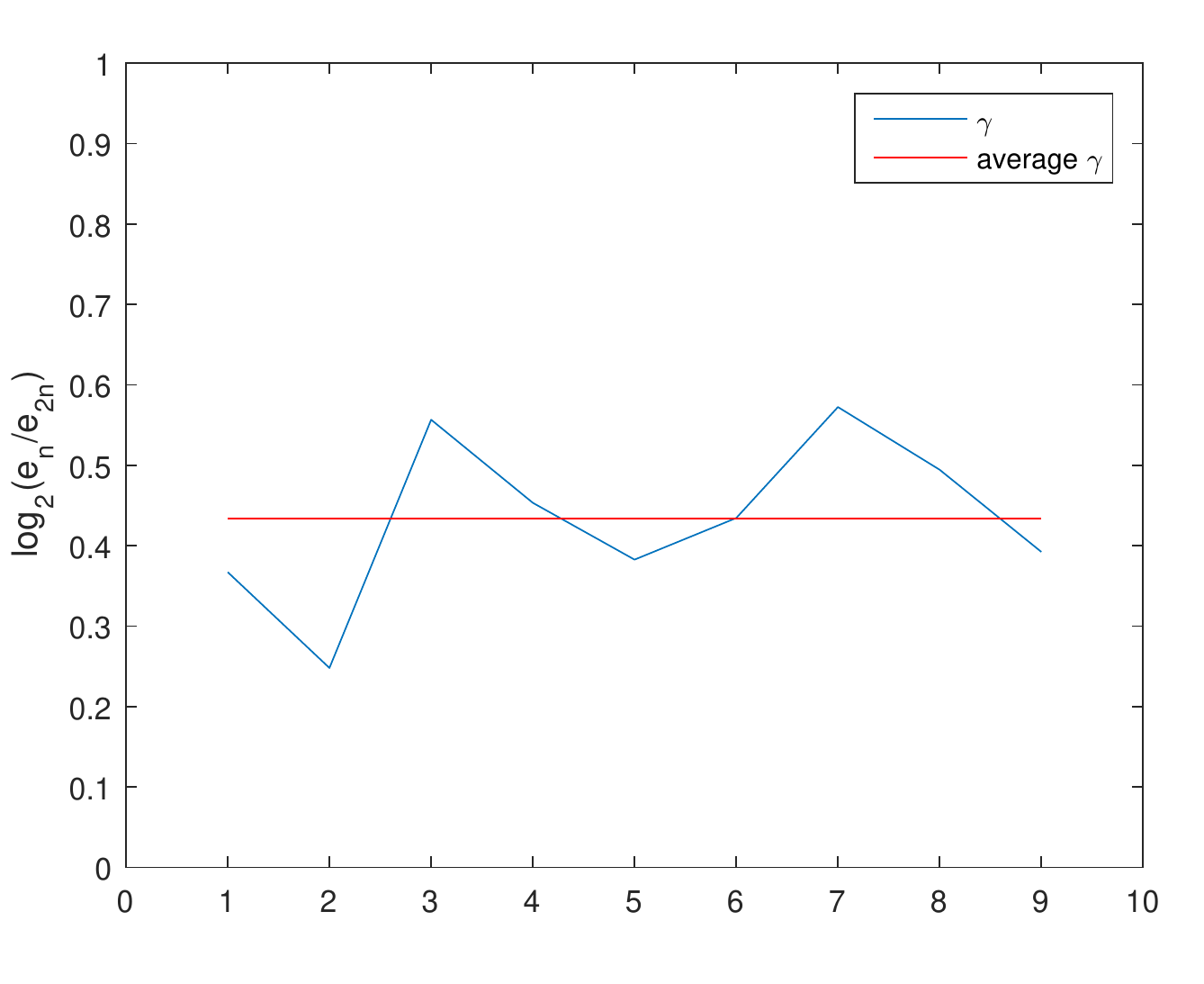}
\\
\end{tabular}

\end{center}
\caption{Simulation 4. On the left the graphic for the error $e_ n$ in function of the number of iterations. $ N_ {x} = N_ {y} = 50$, $T = 30$, $\left [y_ {min}, y_ {max} \right] = \left [-5.5 \right]$, 
so the area of each grid square is $\frac{1} {250}$; on the right the order of convergence $\gamma$, defined as $\log_{2}\left(\frac{e_{n}}{e_{2n}}\right)$. 
We performed 11 runs, with 100, 200, 400, 800, 1600, 3200, 6400, 12800, 25600, 51200, 102400 points.}
\end{figure}

\end{document}